\newtheorem{theorem}{Theorem}
\newtheorem{lemma}{Lemma}
\newtheorem{corollary}{Corollary}
\title{New Clocks, Optimal Line Formation and \\ Self-Replication Population Protocols}
\author{
Leszek G\k asieniec\\
{\small Department of Computer Science, University of Liverpool, 
Ashton Street, L69 38X, U.K.}
\and
Paul Spirakis\\
{\small Department of Computer Science, University of Liverpool,
Ashton Street, L69 38X, U.K.}
\and
Grzegorz Stachowiak\\
{\small Institute of Computer Science, University of Wroc{\l}aw, 
Joliot Curie 15, Wroc{\l}aw, Poland}
}
\begin{document}

\maketitle

\begin{abstract}
The model of population protocols is used to study distributed processes based on pairwise interactions between anonymous agents drawn from a large population of size $n.$
The interacting pairs of agents are chosen by the {\em random scheduler} and their states 
are amended by the predefined {\em transition function} governing the considered process.
The state space of agents is fixed (constant size) and the size $n$ is not known, 
i.e., not hard-coded in the transition function.
We assume that a population protocol starts in the predefined initial configuration of agents' states representing the input, and it concludes in an output configuration 
reflecting on the solution to the considered problem.
The~sequential time complexity of a protocol refers to the number of interactions required to stabilise this protocol in one of the final configurations.
The {\em parallel time} is defined as the sequential time divided by $n.$  

In this paper we consider a known variant of the standard population protocol model in which agents are allowed to be connected by edges, referred to as the {\em network constructor} model.
During an interaction between two agents the relevant connecting edge can be formed, maintained or eliminated by the transition function.
Since pairs of agents are chosen uniformly at random the status of each edge 
is updated every $\Theta(n^2)$ interactions in expectation which coincides with $\Theta(n)$ parallel time. This phenomenon provides a natural lower bound on the time complexity for any non-trivial 
network construction designed for this variant.
This is in contrast with the standard population protocol model in which efficient protocols operate in $O(\mbox{poly}\log n)$ parallel time.

The main focus of this paper is on efficient manipulation of linear structures including formation, self-replication and distribution (including pipelining) of {\em complex information} in the adopted model.
\begin{itemize}
    \item We propose and analyse a novel edge based phase clock counting parallel time $\Theta(n\log n)$ in the network constructor model, showing also that its leader based counterpart provides the same time guaranties in the standard population protocol model. Note that all currently known phase clocks can count parallel time not exceeding $O(\mbox{poly}\log n).$ 
    \item We prove that any spanning line formation protocol requires $\Omega(n\log n)$ parallel time if high probability guaranty is imposed. 
    We also show that the new clock enables an optimal $O(n\log n)$ parallel time spanning line construction, 
    which improves dramatically on the best currently known $O(n^2)$ parallel time protocol, solving the main open problem in the considered model~\cite{constructors}.
    \item We propose a new {\em probabilistic bubble-sort} algorithm in which random comparisons and transfers are limited to the adjacent positions in the sequence. 
    Utilising a novel potential function reasoning we show that rather surprisingly this probabilistic sorting 
    procedure requires $O(n^2)$ comparisons in expectation and whp, and is on par with its deterministic counterpart.  
    \item We propose the first population protocol allowing self-replication of a {\em strand} of an arbitrary length $k$ (carrying $k$-bit message of size independent of the state space) in parallel time $O(n(k+\log n)).$ The bit pipelining mechanism and the time complexity analysis of self-replication process mimic those used in the probabilistic bubble-sort argument. The new protocol permits also simultaneous self-replication, where $l$ copies of the strand can be created in parallel in time $O(n(k+\log n)\log l).$ 
    We also discuss application of the strand self-replication protocol to pattern matching.
\end{itemize}
All protocols are always correct and provide time guaranties with high probability 
defined as $1-n^{-\eta},$ 
for a constant $\eta>0.$ 
\end{abstract}


\maketitle


\section{Introduction}

The model of {\em population protocols} originates from the seminal paper of 
Angluin et al. 
\cite{DBLP:conf/podc/AngluinADFP04}. 
This model provides tools for the formal analysis of {\em pairwise interactions} between simple indistinguishable entities referred to as {\em agents}. The agents are equipped with limited storage, communication and computation capabilities. 
When two agents engage in a direct interaction their states are amended according to the predefined {\em transition function}. 
The weakest possible assumptions in population protocols, also adopted here, limit the state space of agents to a fixed (constant) size disallowing utilisation of the size of the population $n$ in the transition function.
In the {\em probabilistic variant} of population protocols adopted in this paper, 
in each step the {\em random scheduler} selects from the whole population an ordered pair of agents formed of the {\em initiator} and the {\em responder}, uniformly at random.
The lack of symmetry in this pair is a powerful source of random bits often used by population protocols.
In this variant, in addition to {\em state utilisation} one is also interested in the {\em time complexity} of the proposed solutions. 
In more recent work on population protocols the focus is on {\em parallel time} defined as the total number of pairwise interactions (sequential time) leading to the solution divided by the population size $n$. 
For example, a core dissemination tool in population protocols known as {\em one-way epidemic}~\cite{DBLP:journals/dc/AngluinAE08a} distributes simple (e.g., 0/1) messages to all agents in the population utilising $\Theta(n\log n)$ interactions or equivalently $\Theta(\log n)$ parallel time.   
The parallel time is meant to reflect on massive parallelism of simultaneous interactions. 
While this is a simplification \cite{DBLP:journals/corr/abs-2108-11613}, it provides a good estimation on locally observed time expressed in the number of interactions each agent was involved in throughout the computation process.

Unless stated otherwise, we assume that any protocol starts in 
the predefined {\em initial configuration} with all agents being 
in the same {\em initial state}.
A population protocol {\em terminates with success} if the whole population stabilises eventually, i.e., it arrives at and stays indefinitely in one of the {\em final configurations} of states representing the desired property of the solution.



\subsection{Network Constructors Model}

While in the standard population protocol model the population of agents remains unstructured, in the network {\em constructors} model introduced in~\cite{constructors} and adopted in this paper during an interaction between two agents the edge connecting them can be formed, maintained or eliminated by the transition function.
In this way the protocol instructs agents how to organize themselves into temporary or more definite network structures.
%
%
%

Note that since pairs of agents are chosen uniformly at random the status of any edge is updated on average every $\Theta(n^2)$ interactions which coincides with $\Theta(n)$ parallel time. 
With the exception of some relaxed expectations~\cite{polylog},
this phenomenon provides a natural lower bound on the time complexity of non-trivial network construction processes, see~\cite{constructors}. 

\vspace*{0.1cm}\noindent{\bf Model specificity}
Whenever possible we will use capital letters to denote states of the agents. 
In order to accommodate  edge connections the transition function governs the relation between triplets of the following type:
\[P + Q + S\ \longrightarrow\ P' + Q' + S' .\]
The first two terms on both sides of the rule refer to the states  $P$ and $Q$ of 
the initiator and the responder (respectively) before and $P'$ and $Q'$ after the interaction. 
The third term $S$ before and $S'$ after the interaction is a binary flag indicating the status of the connection between the two agents, where the edge presence is declared by $1$ and by $0$ the lack of it. 
The states of agents are often more complex being a combination of a fixed number of attributes.
Such states are represented as tuples. 
For such compound states we use vector representation with acute brackets $<,>$, 
where the individual attributes are separated by commas.


\vspace*{0.1cm}\noindent{\bf Probabilistic guarantees}
Let $\eta$ be a universal positive constant referring to the reliability of our protocols. We say that
an event occurs with {\em negligible} probability if it occurs with probability at most $n^{-\eta}$, and
an event occurs with high probability (whp) if it occurs with probability at least $1-n^{-\eta}$.
This estimate is of an asymptotic nature, i.e., we assume $n$ is large enough to validate the results.
Similarly, we say that an algorithm succeeds with high probability if 
it succeeds with probability at least $1-n^{-\eta}$.
When we refer to the probability of failure $p$ different to $n^{-\eta}$, 
we say directly {\em with probability at least $1-p$}.
Our protocols make heavy use of Chernoff bounds and the new tail bounds for sums of geometric random variables derived in \cite{geometric}.
We refer to these new bounds as Chernoff-Janson bounds.

We also use notation
$f(n)\sim g(n) \Leftrightarrow \frac{f(n)}{g(n)}\underset{n\to\infty}{\longrightarrow} 1$

\subsection{Our results and their significance} 
The model of population protocols gained considerably in  popularity in the last 15 years. We study here several central problems in distributed computing by focusing on the adopted variant of population protocols. These include {\em phase clocks}, a distributed synchronisation tool with good space, time accuracy, and probabilistic guarantees.
The first study of leader based $O(1)$ space phase clocks can be found in the seminal paper by Angluin {\em et al.} in~\cite{DBLP:journals/dc/AngluinAE08a}. Further extensions including junta based nested clocks counting any $\Theta(\text{poly}\log n)$ parallel time were analysed in~\cite{DBLP:journals/jacm/GasieniecS21}.
Leaderless clocks based on power of two choices principle were used in fast majority protocols~\cite{DBLP:conf/soda/AlistarhAG18},
and more recently constant resolution phase clocks propelled the optimal {\em majority} protocol~\cite{DBLP:journals/corr/abs-2106-10201}.
In this work we propose and analyse a new matching based phase clock allowing to count $\Theta(n\log n)$ parallel time.
This is the first clock confirming the conclusion of the slow leader 
election protocol based on direct duels between the remaining leader candidates. %
We also propose an edge-less variant of this clock based on the computed leader. 
This clock powers the first optimal $O(n\log n)$ parallel time spanning line construction, a~key component of universal network construction, improving dramatically on the best 
currently known $O(n^2)$ parallel time protocol, and solving the main open problem from~\cite{constructors}.
%

We also consider a probabilistic variant of the classical bubble-sort algorithm, in which any two consecutive positions in the sequence are chosen for comparison uniformly at random. We show that rather surprisingly this variant is on par with its deterministic counterpart as it requires $\Theta(n^2)$ random comparisons whp.
While this new result 
is of independent algorithmic interest, together with the edge-less clock they conceptually power the strand (line-segment carrying information) self-replication protocol studied at the end of this paper.

In a wider context, {\em self-replication} is a property of a dynamical system which allows reproduction.
Such systems are of increasing interest in biology, e.g., in the context of how life could have begun on Earth~\cite{Lincoln}, but also in computational chemistry~\cite{moulin2011dynamic}, robotics~\cite{freitas} and other fields.
In this paper, a larger chunk of information (well beyond the limited state capacity) is stored collectively 
in a strand  of agents. 
Such strands may represent strings in pattern matching, a large value, or a code in more complex distributed process. 
In such cases the replication mechanism facilitates an improved accessibility to this information.
%
%
We propose the first strand self-replication protocol allowing to reproduce a strand of non-fixed size $k$ in parallel time $O(n(k+\log n)).$ This protocol permits concurrent replication, where $l$ copies of a strand can be generated in parallel time $O(n(k+\log n)\log l).$ 
The parallelism of this protocol is utilised in efficient pattern matching in Section~\ref{patmat}.

\subsection{Related work}

One of the main tools used in this paper refers to the central problem of {\em leader election}, with the final configuration comprising a single agent 
in the {\em leader} state 
and all other agents in 
the {\em follower} state.
The leader election problem received in recent years greater attention in the context of population 
protocols. 
In particular, the results in~\cite{DBLP:conf/wdag/ChenCDS14, DBLP:conf/soda/Doty14} laid down the foundation for the proof that leader election cannot be solved in a sublinear time with agents utilising a fixed number of states~\cite{DBLP:conf/wdag/DotyS15}.
In further work~\cite{DBLP:conf/icalp/AlistarhG15}, Alistarh and Gelashvili studied the relevant upper bound, where they proposed 
a new leader election protocol stabilising in time $O(\log^3 n)$ assuming $O(\log^3 n)$ states per agent.

In a more recent work Alistarh {\em et al.}~\cite{DBLP:conf/soda/AlistarhAEGR17} considered more general trade-offs between the number of states used by the agents and the time complexity of stabilisation.
In particular, the authors delivered a separation argument distinguishing between {\em slowly stabilising} population protocols 
which utilise $o(\log\log n)$ states and {\em rapidly stabilising} protocols relying on $O(\log n)$ states per~agent. 
This result coincides with another fundamental result by Chatzigiannakis {\em et al.}~\cite{DBLP:journals/tcs/ChatzigiannakisMNPS11} 
stating that population protocols utilizing $o(\log n)$ states are limited to semi-linear predicates,
while the availability of $O(n)$ states (permitting unique identifiers) admits computation of more general symmetric predicates.
Further developments include also a protocol which elects the leader in time 
$O(\log^2 n)$ w.h.p. and in expectation utilizing $O(\log^2 n)$ states~\cite{DBLP:conf/podc/BilkeCER17}. 
The number of states was later reduced to $O(\log n)$ by
Alistarh et al. in \cite{DBLP:conf/soda/AlistarhAG18} and by Berenbrink et al. in \cite{DBLP:conf/soda/BerenbrinkKKO18} through the application of two types of synthetic coins.

In more recent work G\k asieniec and Stachowiak reduce memory utilisation to $O(\log\log n)$ while preserving the time complexity $O(\log^2 n)$ whp~\cite{DBLP:journals/jacm/GasieniecS21}.
The high probability can be traded for faster leader election in the expected parallel time $O(\log n\log\log n)$, see~\cite{DBLP:conf/spaa/GasieniecSU19}. This upper bound was recently reduced to the optimal expected time $O(\log n)$ by Berenbrink {\em et al.} in~\cite{DBLP:conf/stoc/BerenbrinkGK20}. One of the main open problems in the area is to establish whether one can elect a single leader in time $o(\log^2 n)$ whp while preserving the optimal number of states $O(\log\log n).$

%

\section{Two phase clocks and leader election}\label{s:clocks}

In order to compute the unique leader and confirm its computation we execute two protocols simultaneously.
Namely, the slow leader election protocol which concludes in parallel time $O(n\log n)$ whp, and the new (introduced below) {\em matching based phase clock} which counts parallel time $\Theta(n\log n)$ whp. 
The conclusion of leader election is confirmed via one-way epidemic when the final state (in this clock) is reached by any agent.
This leader is utilised in edge-less clock in nearly optimal computation of the line containing all agents, see Section~\ref{line}, and in self-replication of strands of information, see Section~\ref{replic}.

The transition rules for governing the slow leader election and 
the new clocks follow.

\subsection{Slow leader election}
In the initial configuration all agents are in state $L$ and the leader election protocol is driven by a single rule:
\[L+L\rightarrow L+F,
\]
where $L$ represents a leader candidate, and $F$ stands for a {\em follower} (or a {\em free}) agent. 
It is well known that this leader election protocol operates in the expected parallel time $\Theta(n)$, and in parallel time $\Theta(n\log n)$ whp.

\subsection{Matching based phase clock}
The proposed matching based clock assumes the constructors model in which the transition function recognises whether two interacting agents are connected by an edge or not, indicated by $1$ or $0$, respectively.
The agents begin in the predefined state $<start>.$
When two agents in state $<start>$ interact they get connected and they enter the {\em counting stage} 
with their counters set to $<0>$. Eventually these counters reach the maximum (exit) value $max$.
The values of the counters can go either up or down, depending on the rule used during the relevant interaction. 
Note also that the cardinality of the subpopulation of agents holding 
the smallest counter value in the population can only go down.
We prove that as long as there are agents with counter value below a fixed threshold, 
it is almost impossible for any other agent to reach the counter value $max.$ 
And this is the case for the first $\Theta(n^2\log n)$ interactions, i.e., $\Theta(n\log n)$ parallel time.
Note also that the number of agents taking part in the counting process is always even as they enter and leave this process in pairs. The counting stage sub-protocol guarantees that the counters of all agents which enter this stage reach level $max$ (denoted by state $<max>$) in time $\Theta(n\log n),$
see Theorem~\ref{t:counting}. 
And during the next interaction between the two connected agents in state $<max>$ the connection is dropped and the states are updated to $<end>$ indicating the exit from the counting stage.

The rules of the transition function used in the counting stage are as follows:

\vspace*{0.1cm}\noindent{\bf Initialisation}
\[<start>+<start>\ +\ 0\ \longrightarrow\ 
<0>+<0>+\ 1
\]

\noindent{\bf Timid counting}

\begin{itemize}
    \item 
    For all connected $i\le j$ and $i<max$
    \[<i>+<j>+\ 1\ \longrightarrow\ 
<i+1>+<i+1>+\ 1\]
    \item For all disconnected $i<j$
    \[<i>+<j>+\ 0\ \longrightarrow\ 
<i>+<i+1>+\ 0 \]
\end{itemize}

\noindent{\bf Maximum level epidemic}
\[ <max>+<i>+\ 0\ \longrightarrow\ <max>+<max>+\ 0
\]

\noindent{\bf Conclude and disconnect}
\[<max>+<max>+\ 1\ \longrightarrow \ <end>+<end>+\ 0
\]
\[<start>+<end>+\ 0\ \longrightarrow \ <end>+<end>+\ 0\ \text{\# takes care of the odd $n$ case}
\]


In the next subsection we discuss the rules of an alternative phase clock in which instead of a matching the agents use virtual edges connecting them with the computed leader.

\subsection{Leader based (edge-less) phase clock}
We allocate separate constant memory to host the states of the leader based clock. This allows to run the two clocks simultaneously and independently.
The followers in the leader based clock start with the counters set to  $<0>,$ and $L$ refers to the leader state. Note that state $<0>$ is initiated for the leader based clock as soon as the agent reaches
state $<max>$ or $<end>$ in the matching based clock.
Below we present the timid counting rules which now refer to the interactions with the leader $L$ along the virtual connections.

\vspace*{0.1cm}\noindent{\bf Timid counting}

\begin{itemize}
    \item Leader interactions, for $i<max$
\[<i>+\ L\ \longrightarrow\ <i+1>+\ L
\]

    \item Non-leader interactions, for $i<j$
\[<i>+<j>\ \longrightarrow\ <i>+<i+1>
\]   
    
\end{itemize}

One can show that the two clocks have the same asymptotic time performance, see Section~\ref{s:analysis} for the relevant detail.
Note also that the leader based clock can be used independently from any edge dependent process executed in the population simultaneously. 

\subsection{Periodic leader based (edge-less) clock}\label{plbc}
One can expand the functionality of the leader based clock to pace a series of consecutive rounds of a more complex process, with each round operating in parallel time $\Theta(n\log n).$
The extension is assumed to work in rounds formed of three consecutive stages 0, 1 and 2, where each stage is associated with a single execution (full turn) of the leader based clock.
The conclusion of each stage is announced with the help of one-way epidemic
in parallel time $O(\log n)$ whp. 
And when this happens all agents which received the announcement proceed to the next stage. 
This means that after at most $O(\log n)$ parallel time delay (caused the epidemic) all agents will run the clock in the same stage whp. 
Note also that while the signal to start the next stage remains in the population throughout the whole stage, it will be wiped out whp by the signal announcing the beginning of the stage that follows. And since we have 3 stages during each round the synchronisation of agents is guarantied whp.

\section{The clocks' analysis}\label{s:analysis}

In this section we provide the time and the probabilistic guaranties for the two phase clocks introduced in Section~\ref{s:clocks}.
We first analyse the matching based clock and later extend the reasoning to the leader based (edge-less) clock.
We prove the following theorem towards the end of this section.

\begin{theorem}\label{t:counting}
In either of the two clocks state $<$$max$$>$ is reached by any agent in parallel time $\Theta(n\log n)$ whp.
\end{theorem}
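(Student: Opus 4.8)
The plan is to establish the upper and lower bounds separately and combine them by a union bound; I treat the matching based clock in detail and then indicate the (minor) changes for the leader based one. Throughout, $max$ is the fixed exit constant, $\mu(t)$ denotes the smallest counter value present among the agents already in the counting stage after $t$ interactions, and $A_j(t)$ is the set of such agents whose counter equals exactly $j$. Two structural observations are used repeatedly. (i) A counter is raised only by the connected timid-counting rule applied to the (weakly) smaller endpoint of a matched edge; hence between two consecutive interactions of a fixed matched pair each of its two agents has a non-increasing counter, and a counter rises by $1$ only when the pair meets and the two agents are tied at the smaller value. (ii) When $\mu(t)=m$, no interaction creates a new agent at level $m$ (the disconnected rule sends the larger endpoint to some level $\ge m+1$, the connected rule sends both endpoints to $\min+1\ge m+1$); so $A_m$ is monotonically shrinking, and an agent $v\in A_m$ leaves $A_m$ precisely when it meets its matched partner.

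For the upper bound I first observe that all agents leave $<start>$ within $O(n^2)$ interactions in expectation and $O(n^2\log n)$ whp (this is the random matching induced by $<start>+<start>$; for odd $n$ the single leftover agent is inert and is later absorbed by $<start>+<end>$). Thereafter, by (ii), $\mu$ rises from $m$ to $m+1$ exactly once every agent of $A_m$ has met its partner; since $|A_m|\le n$ and a fixed pair meets with probability $\Theta(1/n^2)$ per interaction, a union bound over $A_m$ combined with geometric tail estimates shows that each level is cleared within $O(n^2\log n)$ interactions whp. As $max=O(1)$, within $O(n^2\log n)$ interactions whp some agent reaches $<max>$; the maximum-level-epidemic rule is a one-way epidemic, so within a further $O(\log n)$ parallel time whp every agent is at $<max>$. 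This yields the $O(n\log n)$ parallel-time upper bound.

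For the lower bound it suffices to show that whp no agent reaches $<max>$ during the first $\Theta(n^2\log n)$ interactions. Let $c(t)=|A_0(t)|$; each interaction removes at most two agents from $A_0$, and only when one of the $c(t)$ level-$0$ agents meets its partner, so the rate of decrease of $c$ is $O(c/n^2)$ per interaction. Running the standard ``each halving of $c$ costs $\Omega(n^2)$ interactions whp'' argument through the $\Theta(\log n)$ halvings needed to drop $c$ from $n$ to $\sqrt n$ gives: whp $c(t)\ge\sqrt n$, hence $\mu(t)=0$, throughout a window $W$ of $\Theta(n^2\log n)$ interactions. Inside $W$ there are always at least $\sqrt n$ agents at level $0$, each of which, on meeting any agent at level $\ge 2$, knocks it down to level $\le 1$; such a collision hits a fixed higher agent within $O(n^{3/2})$ interactions whp, far sooner than the $\Theta(n^2)$ interactions it needs to meet its own partner. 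Quantitatively, by (i) a fixed matched pair can reach level $\ell$ only if some run of $\ell-1$ consecutive partner-meetings of the pair has the property that in each of its $\ell-2$ intervening gaps neither endpoint meets a level-$0$ agent, an event of probability $O(1/n)$ per gap; there are only $O(\log n)$ meetings per pair and $n$ pairs, so a union bound shows that whp no pair reaches level $\ell$ once $\ell$ is a large enough constant. Fixing $max$ to be such a constant --- which simultaneously makes the clock's count outlast the $O(n\log n)$-parallel-time slow leader election --- completes the lower bound, and the same window shows $\mu$ reaches only $max-\Theta(1)$ within $\Theta(n^2\log n)$ interactions.

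The leader based clock is handled by the identical argument with the single elected leader $L$ playing the role that a matched partner played above: an agent climbs one level only by meeting $L$ (which is itself unchanged), is dragged down to level $\le 1$ on meeting any lower non-leader exactly as before, and $A_m$ shrinks only when its agents meet $L$; the one new point --- that the leader clock's counters are all initialised to $<0>$ within an $O(\log n)$-parallel-time window, when the matching clock's terminal epidemic reaches everyone --- is absorbed into the estimates. The step I expect to be the main obstacle is precisely the quantitative drag-down argument in the lower bound: one must control, across dependent histories, that between two consecutive partner-meetings (resp.\ $L$-meetings) \emph{both} endpoints of a pair are knocked to a low level, and that this keeps every counter strictly below the constant $max$ for the entire $\Theta(n^2\log n)$-interaction window; making the constants line up so that $max$ stays constant while the count still exceeds leader election's running time is the delicate part. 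The remaining ingredients --- the matching/leader-election timing, level-clearing, the one-way epidemic, and the geometric decay of $c$ --- are routine Chernoff / Chernoff--Janson computations.
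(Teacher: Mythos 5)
Your upper bound is essentially the paper's: coupon--collection of the $\le n/2$ matched pairs clears each level in $O(n^2\log n)$ interactions whp, repeated $max=O(1)$ times, plus the one-way epidemic (cf.\ Lemmas~\ref{formedges} and~\ref{clockMaxtime}). Your lower bound, however, takes a genuinely different route. The paper's engine is Lemma~\ref{l:01n}, a multi-case induction showing that for \emph{every} $k$ below the current frontier a linear mass of $0.1n$ agents sits at levels $\le k$; combined with the edge-collector lower tail (Lemma~\ref{Lemma1}) this charges $0.4n\ln n$ parallel time to each unit advance of the minimum level, yielding a bound $(max-d-2)\cdot 0.4\,n\ln n$ that \emph{scales with} $max$. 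You instead track only $A_0$, show via a halving argument that it stays above $\sqrt n$ for a single window of $\Theta(n^2\log n)$ interactions, and argue that within that window no pair can string together the $\ell-1$ consecutive clean partner-meetings needed to reach a constant level $\ell$. This is leaner and suffices for the literal $\Theta(n\log n)$ statement, but it buys strictly less than the paper's argument: your window's length is fixed by the depletion rate of $A_0$ and does \emph{not} grow when $max$ is increased, whereas the paper needs exactly that tunability --- $max$ is chosen so that the clock provably outlasts the $O(n\log n)$-whp slow leader election, whose constant is not under the designer's control. Two smaller points: with only $\sqrt n$ guaranteed level-$0$ agents the probability that a gap between consecutive partner-meetings contains no knockdown is $O(n^{-1/2})$, not $O(1/n)$ (the union bound still closes, at the price of a larger $\ell$); and the dependence issues you flag as the main obstacle are real but are resolved by the same ``draw a $1$ with a fixed fallback probability once the invariant fails'' domination device you already deploy in the halving step --- this is precisely how the paper's Lemma~\ref{l:01n} handles them.
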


When the matching based clock starts working, it forms a matching consisting of unmatched pairs of agents.
In Lemmas \ref{formedges},\ref{Lemma2} we specify how fast this is done. 
In~Lemma \ref{l:01n} we prove that whp no counter in the population has value $max$ 
for as long as the smallest counter value is at most $max-d-2$.
The constant $d$ depends on $\eta$ and its value can be derived from the proof of Lemma \ref{Lemma3}.
Also, if $T$ is the time elapsed before the value $max$ is observed in the population for the first time,
using Lemma \ref{clockMintime} one can conclude that $T>(max-d-2)0.4n\ln n$ whp.
Using this inequality, the top value $max$ can be derived from $d$ and time $T=\Theta(n\log n)$ 
which upperbounds whp the parallel time of slow leader election process.

\begin{lemma}\label{formedges}
All edges of the matching are formed in the expected parallel time $\Theta(n)$ and whp $O(n\log n)$.
\end{lemma}

\begin{proof}
The probability of an interaction forming edge $i+1$ when $i$ edges are already formed is
$\frac{(n-2i)(n-2i-1)}{n(n-1)}$.
Thus the number of interactions separating formation of edges $i$ and $i+1$ has geometric distribution with the expected value
$\frac{n(n-1)}{(n-2i)(n-2i-1)}$.
Thus the expected number of interactions to form all edges is
$\sum_{i=1}^{n/2-1} \frac{n(n-1)}{(n-2i)(n-2i-1)},$ which is $ \Theta( n^2)$.

A sufficient condition to form all the edges is
that all possible $\binom{n}{2}$ pairs of agents are generated by the random scheduler.
The probability of not choosing a fixed pair
in first $cn^2\log n$ interactions
is $\left(1-1/\binom{n}{2}\right)^{cn^2\log n}$,
which is negligible for $c$ big enough.
Thus all edges are formed after parallel time $O(n\log n)$ whp.
\end{proof}



The following lemma refers to early interactions in the matching based clock.
\begin{lemma}\label{Lemma2}
After parallel time $0.51$ at least $\frac n 2$ agents belong to already formed edges whp.
\end{lemma}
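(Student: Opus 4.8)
The plan is to follow the number of agents not yet absorbed into the matching and to show that it drops below $n/2$ within $0.51n$ interactions (i.e.\ parallel time $0.51$) whp. Let $X_t$ denote the number of agents still in state $<start>$ after $t$ interactions, so $X_0=n$; since no transition rule ever outputs $<start>$, the sequence $X_t$ is non-increasing, and $n-X_t$ is the number of agents that have been paired into edges of the matching. During a window of parallel time $0.51$ essentially none of these edges is dismantled again: an edge is dropped only after both its endpoints climb to the constant level $<max>$ and then interact once more while still connected, and a routine union bound over the $O(n)$ candidate edges shows that this affects a negligible number of agents in the first $0.51n$ interactions; hence ``absorbed so far'' and ``currently belonging to a formed edge'' agree whp. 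Thus it suffices to prove $X_{0.51n}\le n/2$ whp.

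Next I would express the hitting time $T:=\min\{t:X_t\le n/2\}$ as a sum of independent geometric variables. Whenever $X_t=x$, the next interaction decreases $X$ by $2$ precisely when the random ordered pair chosen by the scheduler consists of two $<start>$ agents, which happens with probability $\frac{x(x-1)}{n(n-1)}$; the waiting times for the successive values $x=n,\,n-2,\,n-4,\dots$ use disjoint blocks of interactions and are therefore independent. Hence
\[
T=\sum_{j=0}^{n/4-1}G_j,\qquad G_j\sim\mathrm{Geom}\!\left(\tfrac{(n-2j)(n-2j-1)}{n(n-1)}\right)\ \text{independent}.
\]
Bounding each summand of $\mathbf{E}[T]=\sum_{j}\frac{n(n-1)}{(n-2j)(n-2j-1)}$ by $\frac{n^2}{(n-2j-1)^2}$ and comparing with $\tfrac12 n^2\int_{n/2}^{n}x^{-2}\,dx=\tfrac{n}{2}$ gives $\mathbf{E}[T]=\tfrac{n}{2}(1+o(1))$, so $\mathbf{E}[T]\le 0.501\,n$ for $n$ large enough. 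Finally I would invoke the Chernoff--Janson tail bound for sums of independent geometric random variables from~\cite{geometric}: the smallest success probability here is $p_\star=\frac{(n/2+2)(n/2+1)}{n(n-1)}\to\tfrac14$, so $p_\star\mathbf{E}[T]=\Theta(n)$ and $\Pr[T\ge 0.51\,n]\le\Pr[T\ge 1.017\,\mathbf{E}[T]]\le e^{-\Omega(n)}$, which is negligible. Therefore $X_{0.51n}\le n/2$ whp, i.e.\ at least $n/2$ agents belong to formed edges after parallel time $0.51$.

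The argument is essentially routine; the only points requiring mild care are (i) pinning the leading constant of $\mathbf{E}[T]$ to exactly $\tfrac12$ so that it sits safely below $0.51$ (which is precisely why the statement uses $0.51$ rather than $0.5$), and (ii) the remark that matched agents are not disconnected during this short window. The independence of the $G_j$ and the geometric concentration are standard, and the resulting failure probability is in fact far smaller than $n^{-\eta}$.
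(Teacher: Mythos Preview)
Your argument is correct and follows essentially the same route as the paper: both express the time to match $n/4$ edges as a sum of independent geometrics $\mathrm{Geom}\!\big(\tfrac{(n-2j)(n-2j-1)}{n(n-1)}\big)$, evaluate the mean to $\sim n/2$ via the same integral comparison, and then apply the Chernoff--Janson bound with $p_\star\to\tfrac14$ to get a negligible tail above $0.51n$. Your point~(ii) about edges not being dismantled in this short window is a small extra care that the paper's proof omits, but otherwise the two arguments coincide.
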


\begin{proof}
Assume that so far exactly $i$ edges are formed. The probability that during an interaction edge $i+1$ is formed is $\frac{(n-2i)(n-2i-1)}{n(n-1)}$. So the expected number of interactions
$T_i$ of forming edge $i+1$ is $\frac{n(n-1)}{(n-2i)(n-2i-1)}$.
And in turn the expected number of interactions $T$ of forming first $n/4$ edges satisfies
\[T=T_0+\cdots+T_{n/4}=\sum_{i=0}^{n/4} \frac{n(n-1)}{(n-2i)(n-2i-1)}\sim
  n\int_0^{1/4} \frac{dx}{(1-2x)^2}=\frac{n}{2}.
\]

We can estimate the probability that $T$ exceeds $0.51n$ using
Chernoff-Janson bound (Thm.2.1 of \cite{geometric}) proving that it is negligible.
In this case we can substitute (for $n$ large enough)
\[
p_*=\frac{1}{4},
\mbox{ }
M\sim\frac{n}{2},
\mbox{ and }
\lambda=\frac{0.51}{0.5}>1.
\]
Thus we get that the expected number of interactions $T\ge 0.51n$ (parallel time $\ge 0.51$) with probability less than
\(e^{-p_*M(\lambda-1-\ln\lambda)}=
  e^{-\frac{1}{8}n(\lambda-1-\ln\lambda)},
\)
i.e., with negligible probability.

\end{proof}

As soon as the edges are formed communication
along them begins.
In order to analyze this process we define the {\em edge collector problem} in which one is asked to collect (draw) all edges of a given matching $M$ of cardinality $n'>\frac n 4.$ 
This process concludes when the random scheduler generates interactions along all edges of the matching.
In addition, one can also  infer from our proof that in fact a maximum matching is formed whp. However, we can show it only after the clock analysis.
Therefore denote the number of edges of collected matching by $n'$.
As we indicated in Lemma~\ref{Lemma2} this matching has more than $\frac n 4$ edges whp. 

\begin{lemma}\label{Lemma1}
For any cardinality $n'\in [n/4,n/2],$
the parallel time of the edge 
collector problem is $O(n\log n)$ whp.
In addition, the parallel time needed to collect
the last $0.05\cdot n$ edges (of the matching) 
is at least $0.4\cdot n \ln n$ whp.
\end{lemma}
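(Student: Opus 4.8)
The plan is to treat the edge collector problem as a coupon collector problem over the $n'$ edges of the matching, where at each interaction the random scheduler generates a uniformly random ordered pair of agents, and an edge is ``collected'' when the pair happens to coincide with one of its endpoints. Since $n'\in[n/4,n/2]$, the probability that a given interaction collects a specific not-yet-collected edge is exactly $\tfrac{2}{n(n-1)}$, independent of which other edges have been collected. So if $k$ edges remain uncollected, the waiting time to collect the next one is geometric with success probability $\tfrac{2k}{n(n-1)}$, and the total waiting time is a sum of independent geometric variables $\sum_{k=1}^{n'}\mathrm{Geom}\!\left(\tfrac{2k}{n(n-1)}\right)$, with expectation $\tfrac{n(n-1)}{2}\sum_{k=1}^{n'}\tfrac1k=\tfrac{n(n-1)}{2}H_{n'}\sim \tfrac12 n^2\ln n$, i.e.\ parallel time $\sim\tfrac12 n\ln n=\Theta(n\log n)$.

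For the \emph{upper bound} (first sentence) I would first dispatch a union-bound argument in the style of the proof of Lemma~\ref{formedges}: the probability that a fixed edge is not collected within $cn^2\log n$ interactions is $\left(1-\tfrac{2}{n(n-1)}\right)^{cn^2\log n}$, which is at most $n^{-\eta-1}$ for $c$ large enough; a union bound over at most $n/2$ edges then gives that all edges are collected in parallel time $O(n\log n)$ whp. This is clean and requires no concentration machinery. (Alternatively one could invoke the Chernoff--Janson upper tail on the sum of geometrics, but the union bound is simpler here.)

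For the \emph{lower bound} (second sentence) I would isolate the last $0.05n$ edges: collecting them requires, starting from $k=0.05n$ uncollected edges down to $k=1$, a sum $\sum_{k=1}^{0.05n}\mathrm{Geom}\!\left(\tfrac{2k}{n(n-1)}\right)$ of independent geometrics with expectation $\tfrac{n(n-1)}{2}H_{0.05n}\sim\tfrac12 n^2\ln n$ interactions, i.e.\ parallel time $\sim\tfrac12 n\ln n$. Applying the Chernoff--Janson \emph{lower} tail bound (Thm.~2.1 of~\cite{geometric}) to this sum — with the relevant parameters being the minimum success probability $p_*=\tfrac{2}{n(n-1)}$ attained at $k=1$, mean number of interactions $M\sim\tfrac12 n^2\ln n$, and a deviation factor $\lambda$ slightly below $1$ chosen so that $\lambda\cdot\tfrac12 n\ln n > 0.4\,n\ln n$ — shows that the time to collect these last edges is below $0.4\,n\ln n$ only with probability $e^{-p_*M(\lambda-1-\ln\lambda)}$, which is negligible since $p_*M=\Theta(\log n)$ and $\lambda-1-\ln\lambda$ is a positive constant. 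Hence the last $0.05n$ edges take at least $0.4\,n\ln n$ parallel time whp.

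The main obstacle is verifying that the hypotheses of the Chernoff--Janson bound are met in the lower-tail direction and that the product $p_*M$ grows like $\log n$ (rather than being constant), since only then is $e^{-p_*M(\lambda-1-\ln\lambda)}$ negligible rather than merely bounded away from $1$ — this is exactly why restricting attention to the last $0.05n$ edges (which still contribute $\Theta(n^2\log n)$ to the mean, so $p_*M=\Theta(\log n)$) is essential, and why one cannot simply look at, say, the last constantly many edges. Everything else is routine harmonic-sum asymptotics and a union bound mirroring Lemma~\ref{formedges}.
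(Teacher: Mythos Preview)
Your upper bound argument is fine and matches the paper's proof exactly.

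Your lower bound has a real gap. With $p_*=\tfrac{2}{n(n-1)}$ (the success probability at $k=1$) and $M\sim\tfrac12 n^2\ln n$, you get $p_*M\sim\ln n$. The Janson lower-tail bound then gives failure probability at most $e^{-p_*M(\lambda-1-\ln\lambda)}\approx n^{-(\lambda-1-\ln\lambda)}$. For $\lambda=0.8$ (which is what you need, since $0.4/0.5=0.8$), the exponent $\lambda-1-\ln\lambda\approx 0.023$, so the failure probability is only about $n^{-0.023}$. This is \emph{not} negligible in the paper's sense: the reliability parameter $\eta$ is arbitrary, and $n^{-0.023}$ exceeds $n^{-\eta}$ for any $\eta>0.023$. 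Your sentence ``which is negligible since $p_*M=\Theta(\log n)$'' is exactly where the argument breaks.

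The paper repairs this by truncating the sum not at $l=1$ but at $l=n^{0.1}$: it bounds below only the time to go from $0.05n$ uncollected edges down to $n^{0.1}$ uncollected edges. This raises $p_*$ to $\tfrac{2n^{0.1}}{n(n-1)}$ while the mean $M\sim\tfrac{n(n-1)}{2}\ln(0.05n^{0.9})$ stays above $0.44\,n(n-1)\ln n$, so $\lambda=0.4/0.44$ still works and now $p_*M=\Theta(n^{0.1}\ln n)$. The Janson bound then gives $e^{-\Theta(n^{0.1}\ln n)}$, which is super-polynomially small and hence negligible for every $\eta$. The trick you are missing is precisely this truncation: you must stop the sum before the geometric terms with tiny $p_k$ dominate $p_*$, sacrificing an asymptotically negligible portion of the mean in exchange for a usable tail bound.
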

\begin{proof}
The probability of collecting an edge in an interaction,
when $i$ edges are still 
missing
is $p_i=\frac{2i}{n(n-1)}.$
The number of interactions needed to collect this
edge is a random variable $X_i$ which has a geometric
distribution with the average $\frac{n(n-1)}{2i}$.
When $k$ edges are still to be collected, the expected number of interactions to collect extra $k-l$ edges is
\[\sum_{i=l}^k\frac{n(n-1)}{2i}=\frac{n(n-1)}{2} (H_k-H_l)\sim \frac{n(n-1)}{2} \ln\frac{k}{l}.
\]

Using the upper bound of lower tail  (Theorem 3.1) of
Chernoff-Janson bounds we show 
that this number of interactions is
at least $0.4 n(n-1)\ln n$ whp, for $k=0.05n$ and $l=n^{0.1}$.
And indeed, for $n$ large enough one can adopt
\[
p_*=p_l=\frac{2n^{0.1}}{n(n-1)},
\mbox{ }
M\sim\frac{n(n-1)}{2}\ln(0.05n^{0.9}) >0.44n(n-1)\ln n,
\mbox{ and }
\lambda=\frac{0.4}{0.44}<1.
\]
This way we get that the number of interactions smaller
than $0.4 n(n-1)\ln n$ with probability smaller than
\(e^{-p_*M(\lambda-1-\ln\lambda)}\le
  e^{-0.88n^{0.1}\ln n(\lambda-1-\ln\lambda)}
\),
i.e., with negligible probability.

The collection of edges concludes when the endpoints forming each edge interact with one another.
The probability of a missing interaction along some edge in the first $cn^2\log n$ interactions
is $\left(1-1/\binom{n}2\right)^{cn^2\log n}$,
which is negligible for $c$ large enough. 
Thus edge collection concludes whp in $O(n^2\log n)$ interactions translating to parallel time $O(n\log n)$.
\end{proof}

In our clock protocol the value of parameter $d>0$ depends on the constant $\eta$ with respect to the high probability guaranties.
We prove the existence of this parameter $d,$ for any $\eta'=\eta+3$.

\begin{lemma}\label{Lemma3}
In a parallel time period of length $n^{a},$ 
for $0<a<1,$ 
any edge in the matching
is used in less than $d$ interactions whp. 
\end{lemma}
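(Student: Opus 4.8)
The plan is to analyse each candidate matching edge in isolation and then take a union bound over the at most $n/2$ of them. Fix one pair of agents $\{u,v\}$, thought of as a potential edge $e$. A parallel time period of length $n^{a}$ consists of exactly $N=n^{a+1}$ consecutive interactions of the random scheduler, and in each such interaction the (ordered) pair chosen equals $(u,v)$ or $(v,u)$ with probability $p=\tfrac{2}{n(n-1)}$, independently of all other interactions. Hence the number $Y_e$ of interactions along $e$ during the window is $\mathrm{Bin}(N,p)$, with mean $Np=\tfrac{2n^{a+1}}{n(n-1)}=\Theta(n^{a-1})=o(1)$ precisely because $a<1$.

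Next I would apply the crude binomial upper‑tail estimate: for any integer $d\ge 1$,
\[
\Pr[Y_e\ge d]\ \le\ \binom{N}{d}p^{d}\ \le\ (Np)^{d}\ \le\ \bigl(4\,n^{a-1}\bigr)^{d}
\]
for $n$ large enough (using $n(n-1)\ge n^2/2$). Summing over the at most $n/2$ candidate edges yields
\[
\Pr[\exists\,e:\ Y_e\ge d]\ \le\ \tfrac{n}{2}\,\bigl(4\,n^{a-1}\bigr)^{d}\ =\ 2^{2d-1}\,n^{\,1+(a-1)d}.
\]
It then suffices to pick $d$ large enough that the exponent satisfies $1+(a-1)d\le-\eta'$, i.e.\ any $d\ge\frac{1+\eta'}{1-a}$ works; for such $d$ the displayed probability is at most $n^{-\eta'}$ for $n$ large, hence negligible. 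This is exactly the asserted dependence of $d$ on $a$ and, through $\eta'=\eta+3$, on $\eta$; the resulting constant is the one used in the later clock analysis (e.g.\ Lemma~\ref{l:01n}).

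I do not expect a genuine obstacle here; the two points needing a line of care are (i) converting ``parallel time $n^{a}$'' into the exact interaction count $N=n^{a+1}$ so that the i.i.d.\ binomial model is literally valid, and (ii) checking that the slack in $\eta'=\eta+3$ (over the $\eta$ ultimately needed) is enough to absorb any further union bounds later — over a polynomial number of time windows, or over edges — which it is, since every extra $n^{O(1)}$ factor only increases the required $d$ by an additive constant while keeping $d=O_{a,\eta}(1)$.
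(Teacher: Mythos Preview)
Your argument is correct and is essentially the paper's own proof: both bound $\Pr[Y_e\ge d]$ for a fixed pair by $\binom{n^{1+a}}{d}\bigl(\tfrac{2}{n(n-1)}\bigr)^d\le(O(n^{a-1}))^d$ and then choose $d$ large enough to make this $\le n^{-\eta'}$. The only cosmetic difference is that you carry out the union bound over the $\le n/2$ matching edges explicitly and solve for $d\ge(1+\eta')/(1-a)$, whereas the paper states the single-edge bound and lets the $\eta'=\eta+3$ slack absorb the edge (and later time-window) union bounds implicitly.
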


\begin{proof}
By taking into account all possible subsets of $d$ out of $n^{1+a}$ interactions and using the union bound,
the probability that an edge is 
a subject to at least $d$ interactions in parallel time $n^{a}$ does not exceed 
\[\binom{n^{1+a}}{d}
   \left(\frac{2}{n(n-1)}\right)^d\le
  \left(\frac{2n^a}{n-1}\right)^d.
\]
and this value is smaller than $n^{-\eta'}$
is for $d$ large enough.
\end{proof}

\begin{lemma}\label{Lemma4}
In a parallel
time 
period of length $n^{a},$ for $0.1\le a<1$,
there are 
at most 
$2.1 n^a$ interactions along edges of the matching whp. 
\end{lemma}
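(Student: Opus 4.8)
The plan is to bound the number of matching-edge interactions in a window of length $n^a$ parallel time (i.e., $n^{1+a}$ interactions) by treating each interaction as a Bernoulli trial that "hits" the matching. Since the matching $M$ has at most $n/2$ edges, the probability that a single random ordered pair lies along some edge of $M$ is at most $\frac{2\cdot (n/2)}{n(n-1)} = \frac{1}{n-1} \le \frac{1.01}{n}$ for $n$ large enough. So the number $X$ of matching-edge interactions in the window is stochastically dominated by a binomial random variable $\mathrm{Bin}(n^{1+a}, \tfrac{1.01}{n})$ with expectation $\mu \le 1.01\, n^a$.

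First I would set up the precise domination: condition on the matching being fully formed (by Lemma~\ref{formedges} this happens whp in parallel time $O(n\log n)$; more to the point, at any moment the current matching has at most $n/2$ edges, so the per-step hitting probability bound holds unconditionally). Then I would apply a standard multiplicative Chernoff upper tail bound: $\Pr[X \ge (1+\delta)\mu] \le \exp(-\delta^2\mu/3)$ for $\delta \le 1$. Choosing the target $2.1 n^a$ gives $(1+\delta)\mu = 2.1 n^a$ with $1+\delta \ge 2.1/1.01 > 2$, so $\delta > 1$; in that regime I would instead use the bound $\Pr[X \ge t] \le 2^{-t}$ valid for $t \ge 2e\mu$, or more carefully $\Pr[X \ge (1+\delta)\mu]\le (e^\delta/(1+\delta)^{1+\delta})^\mu$. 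Since $\mu = \Theta(n^a)$ and $a \ge 0.1$, the exponent is $-\Theta(n^a) = -\Theta(n^{0.1})$, which is $\le -\eta' \ln n$ for $n$ large enough, hence the failure probability is negligible. The hypothesis $a \ge 0.1$ is exactly what makes $\mu$ grow polynomially in $n$ so that the concentration exponent dominates $\ln n$.

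The only mild subtlety is that the events "interaction $t$ hits the matching" are not independent across $t$ if the matching changes over the window — but the matching edges are only ever formed (never destroyed while the clock runs, before the $<max>+<max>$ conclusion), and the bound "at most $n/2$ edges present" holds at every step, so each step's conditional hitting probability is at most $\frac{1}{n-1}$ regardless of history; this gives the stochastic domination by a genuine binomial and the Chernoff bound applies. I expect this monotone-coupling / conditional-probability bookkeeping to be the only real point to get right; the tail estimate itself is routine. One then takes a union bound over the (polynomially many) starting points of such windows if a "for all windows" statement is needed, which costs only an extra factor absorbed into $\eta'$.
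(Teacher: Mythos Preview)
Your proof is correct and follows essentially the same route as the paper: bound the per-interaction probability of hitting a matching edge by $\frac{2n'}{n(n-1)}\le\frac{1}{n-1}$, compute the mean over $n^{1+a}$ interactions, and apply a Chernoff upper-tail bound, using $a\ge 0.1$ to make the exponent $\Theta(n^a)$ dominate any $\ln n$. The paper's proof is a two-line version of exactly this; your added care about stochastic domination when the matching grows during the window (conditional hitting probability $\le 1/(n-1)$ regardless of history) and your remark about the union bound over windows are refinements the paper simply elides.
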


\begin{proof}
The probability that a given interaction is a matching edge interaction
is $\frac{2n'}{n(n-1)}$.
Thus in a parallel time period of length $n^{a},$
there are expected $2 n^a\frac{n'}{n-1}\le 2n^a$ edge interactions.
By Chernoff bound the number of edge interactions is at most $2.1 n^a$ whp.
\end{proof}

For the clarity of the  presentation, depending on the context we will use
the notions of counters and  {\em levels}
interchangeably.

\begin{lemma}\label{l:01n}
Let $k$ be an integer where $k<max-d-2$.
There exists a constant $c,$ s.t., during parallel time period $(0.51,cn\log n)$ 
presence of any agent on level $i<k$ guaranties whp
a linear subpopulation of agents of size at least $0.1n$ on levels $j\le k$.
Also during this period no agent reaches level $max$ whp.
\end{lemma}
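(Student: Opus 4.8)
The plan is to track two quantities simultaneously over the time window $(0.51, cn\log n)$: the minimum level $\ell_{\min}$ present in the population, and the size of the subpopulation on levels $\le k$. First I would set the stage using the earlier lemmas. By Lemma~\ref{Lemma2}, after parallel time $0.51$ at least $n/2$ agents are matched, so almost all agents participate in the counting stage from that point on. Observe that a counter can only move \emph{down} via the ``timid counting'' rules: in a matching-edge interaction both endpoints jump to $\min(i,j)+1$, and in a non-edge interaction the larger of the two drops to $i+1$ where $i$ is the strictly smaller value. Crucially, the smallest level present can therefore only increase, and it increases only when \emph{all} agents currently at the minimum level $\ell$ get pushed up. In particular, if some agent sits at level $i<k$, then $\ell_{\min}\le i<k$, and I want to argue the set $A_{\le k}$ of agents on levels $\le k$ is large.

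The core of the argument is a ``no fast escape'' claim: as long as $\ell_{\min}\le k < max-d-2$, no agent can reach level $max$. The mechanism is that to climb from level $\le k$ to level $max$ an agent must gain at least $max-k > d+2$ increments. A matching-edge interaction can only push an agent to $\min$ of the two endpoints plus one, so large jumps require the partner already to be high; the ``maximum level epidemic'' rule is disabled because it needs an agent already at $max$. Hence progress toward $max$ for any particular agent is essentially driven by a bounded number of its own interactions. By Lemma~\ref{Lemma3}, in any parallel-time window of length $n^a$ (with $0<a<1$) each matching edge is used fewer than $d$ times whp; choosing $a$ appropriately and chaining a constant or $O(\log n)$ number of such windows over $(0.51,cn\log n)$, one shows via a union bound over the $\le n$ edges that no single edge accumulates enough interactions to lift one of its endpoints by $d+2$ levels before $\ell_{\min}$ has itself advanced past $k$. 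Non-edge interactions only help an agent catch up to an \emph{existing} higher level, so they cannot manufacture a level-$max$ agent either while $\ell_{\min}\le k$. This gives the second assertion, "no agent reaches level $max$ whp."

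For the first assertion — the linear subpopulation of size $\ge 0.1n$ on levels $j\le k$ — I would argue that the minimum level $\ell_{\min}$ cannot sweep through value $k$ too quickly, so that whenever an agent is still at level $i<k$ there must still be many agents at levels $\le k$. The set $A_{\le k}$ shrinks only when agents leave it, i.e. are pushed from level $k$ to $k+1$; each such departure requires an interaction incident to that agent. Starting from $|A_{\le k}|\approx n$ right after time $0.51$, the number of agents that can have left $A_{\le k}$ by time $cn\log n$ is at most the number of relevant interactions, which by Lemma~\ref{Lemma4} is $O(n\log n)$ per edge-interaction count but, more to the point, is controlled so that not all $n$ agents can leave: combining Lemma~\ref{Lemma4} (at most $2.1 n^a$ matching-edge interactions per window of length $n^a$) with Lemma~\ref{clockMintime}/Lemma~\ref{Lemma1} (it takes at least $0.4 n\ln n$ parallel time to collect the last $0.05n$ edges, hence to advance the minimum level the last few steps), one concludes that while $\ell_{\min}$ is still below $k$ at least $0.1n$ agents remain on levels $\le k$. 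A clean way to package this: the last $0.05n$ agents to leave level $\le k$ require collecting their incident edges, and by Lemma~\ref{Lemma1} that alone takes more than $0.4n\ln n > $ (the window length) parallel time, so at any point in $(0.51,cn\log n)$ at which some agent is still at level $i<k$, fewer than $n - 0.1n$ agents have left, i.e. $|A_{\le k}|\ge 0.1n$. A union bound over all the invoked whp-events (finitely many, or $O(\log n)$ many, each negligible with the inflated exponent $\eta'=\eta+3$) closes the argument.

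The main obstacle I anticipate is the bookkeeping in the ``no fast escape'' step: one has to rule out \emph{all} routes to level $max$, including the possibility that a chain of agents leap-frogs upward — agent $u$ climbs a bit via its edge, then lifts its non-edge partner $v$, which lifts $w$, and so on. The key observation that defuses this is monotonicity of $\ell_{\min}$ together with the fact that a non-edge interaction $<i>+<j>\to<i>+<i+1>$ can never produce a level exceeding $\max(i,j)+1 \le (\text{current max level})+1$ with at most one such increment per interaction; so the global maximum level also advances slowly, and is in fact tied to how fast $\ell_{\min}$ drags the whole population upward. Making this ``the maximum level exceeds the minimum by at most $d+2$ throughout the window'' precise — as an invariant maintained whp — is where the real work lies, and it is exactly what Lemma~\ref{Lemma3} was set up to feed. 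Everything else is Chernoff/Chernoff–Janson estimation of the kind already carried out in Lemmas~\ref{Lemma2}--\ref{Lemma4}.
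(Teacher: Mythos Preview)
Your outline has the right shape but two load-bearing steps do not close. For the ``no escape to $max$'' claim you propose to chain Lemma~\ref{Lemma3} over sub-windows of length $n^a$, asserting there are ``a constant or $O(\log n)$'' of them inside $(0.51,cn\log n)$; in fact there are $\Theta(n^{1-a}\log n)$, so the per-window cap of $d$ edge-hits gives only a polynomial bound on an agent's climb, not a constant one. What actually keeps agents below $max$ is the pull-\emph{down} mechanism you mention only tangentially: once one knows $|A_{\le k}|\ge 0.1n$, every agent whp meets a member of $A_{\le k}$ within any window of parallel length $n^{0.1}$, is dragged to level $\le k+1$ by that non-edge interaction, and only \emph{then} does Lemma~\ref{Lemma3} cap its remaining climb within the window at $k+1+d<max$. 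So the no-escape claim \emph{depends on} the linear-subpopulation claim; the two must be maintained together by a joint induction, not proved separately as your plan treats them.

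For the subpopulation claim, your departure-only argument (``the last $0.05n$ agents to leave $A_{\le k}$ require collecting their edges, which by Lemma~\ref{Lemma1} takes $\ge 0.4n\ln n$ time'') breaks because agents leave and re-enter $A_{\le k}$ repeatedly: over parallel time $cn\log n$ there are $\Theta(n\log n)$ matching-edge interactions, hence potentially far more than $n$ crossings of the boundary $k\to k+1$, and ``the last $0.05n$ to leave'' is not a fixed set of edges one can feed to Lemma~\ref{Lemma1}. The paper's proof is instead a net-flow argument run as an induction on time in steps of $n^{0.1}$: whenever $|A_{\le k}|$ dips into $[0.1n,0.11n]$ it splits into cases on how many agents sit strictly below level $k$, and in each case shows the \emph{inflow} (agents above $k$ pulled down by non-edge interaction with those below $k$) dominates the \emph{outflow} (bounded via Lemma~\ref{Lemma4}). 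The hypothesis ``some agent at level $<k$'' is precisely what fuels that inflow; a one-sided departure count cannot substitute for it.
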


\begin{proof}
We prove validity of the lemma whp, i.e., with probability at least (wp) $1-n^{-\eta}$. Let $\eta'=\eta+3$.
The proof is done by induction on parallel time $t$.
First we show that in any time $t$ of the initial parallel time period $[0.51,n^{0.2}]$ the thesis of the lemma
holds wp $1-10tn\cdot n^{-\eta'}$.
Later we prove by induction that until any considered time $t$ the thesis of the lemma holds wp $1-10tn\cdot n^{-\eta'}$.
Note that this guaranties that the thesis holds whp, i.e., wp $1-n^{-\eta},$
for parallel time $t=O(n\log n)$.
Assume that all events in the thesis of the lemma hold before parallel time $t$.
We prove that if the thesis of the lemma holds before parallel time $t$, then
it also holds in time $t$ wp $1-10n^{-\eta'}$.
By the inductive hypothesis before parallel time $t$ or equivalently until parallel time $t-\frac 1 n$ the thesis of the lemma holds
wp $1-10(t-\frac 1 n)n\cdot n^{-\eta'}$.
In turn, we get that until parallel time $t$
the thesis of the lemma holds
wp $1-10tn\cdot n^{-\eta'}$.

We first prove the base step of induction.
As we proved in Lemma \ref{Lemma2}, 
during the initial parallel time $0.51$ at least $n/2$ agents enter the clock with state $(0)$ wp $1-n^{-\eta'}$ 
Some of these agents could also relocate to higher levels.
By Lemma \ref{Lemma4} applied to the initial time period $n^{0.2}$ there are at most $2.1n^{0.2}$ of the latter wp. $1-n^{-\eta'}$.
Thus in parallel time period $[0.51,n^{0.2}]$ level $0$ is the host of at least
$0.5n-2.1n^{0.2}>0.4n$ agents
constantly residing at this level wp $1-2n^{-\eta'}$.
Also, by Lemma \ref{Lemma3} no agent reaches level $max$ wp $1-n^{-\eta'}$.
So in parallel time $t=n^{0.2}$ the lemma holds  wp $1-3n^{-\eta'}$.
Note that $1-3n^{-\eta'}\ge 1-10tn\cdot n^{-\eta'}$ for any $t\ge 0.51$.

Now we prove the inductive step.
We observe first that during parallel time period $[t',t]$, where $t'=t-n^{0.1},$ all agents which entered the clock are at least once
on level $l\le k+1$ wp $1-n^{-\eta'}$.
And indeed during this period an agent avoids interactions with 
agents on levels $j\le k$ wp at most
\[(1-0.1/n)^{n^{1.1}}<e^{0.1n^{0.1}}
\]
Because of this and Lemma \ref{Lemma3}, during this period, any agent which entered the clock does not elevate to levels higher than $k+1+d$ wp $1-2n^{-\eta'}$.
Therefore no agent reaches level $max$ during period $[t',t]$
wp $1-2n^{-\eta'}$.

In order to prove the first thesis of the lemma we consider two cases.

\noindent{\bf Case 1:}
in this case in parallel time $t'$ there are at least $0.11n$ agents on levels not exceeding $k$.
Since by Lemma \ref{Lemma4} in parallel time period $[t',t]$ at most $2.1n^{0.1}$ such agents can increase their level wp. $1-n^{-\eta'}$.
And in turn, in parallel time $t$ there are at least $0.1n>0.11n-2.1n^{0.1}$ agents on levels $j\le k$.

\noindent{\bf Case~2:}
in this case in parallel time $t'$ the number of agents on levels at most $k$ is between $0.1n$ and $0.11n$
and the number of agents on levels below $k$ is at least $3n^{0.1}.$
Let $Y$ be the set of agents belonging to the levels above $k$ in time $t'.$ 
By Lemma \ref{Lemma4} the probability that in parallel time period $[t',t]$ the number of agents below level $k$ drops below $0.9n^{0.1} (=3n^{0.1}-2.1n^{0.1})$
is negligible, i.e., at most $n^{-\eta'}$.
Consider any set $X$ with $0.9n^{0.1}$ agents residing at levels smaller
than $k$ and estimate how many agents from set $Y$ 
interact with them.
For as long as $0.38n$ agents from $Y$ do not interact with $X$, the probability of interaction between an unused (not in contact with agents in $X$) agent in $Y$ and some agent in $X$ is at least $0.68n^{-0.9}$.
Any such interaction increases the number of agents on levels not exceeding $k$.
Consider a sequence of $n^{1.1}$ zeros and ones in which
position $\iota$ is one (1) if and only if either
\begin{itemize}
\item interaction $\iota$ is between an unused agent in $Y$ with an agent in $X$ if there are more than $0.38n$ unused agents in $Y,$
\item if this number is smaller than $0.38n$ value 1 is drawn with a fixed probability $0.68n^{-0.9}$.
\end{itemize}
By Chernoff bound the probability that this sequence has less than $0.6n^{0.2}$ ones is negligible, i.e., at most $n^{-\eta'}$.
Since $0.12n<0.11n+0.6n^{0.2}$ this sequence has less than $0.6n^{0.2}$ ones only when 
the number of agents elevated to levels not exceeding $k$ is smaller than $0.6n^{0.2}$.
Also by Lemma~\ref{Lemma4} during parallel time eriod $[t',t]$ at most $2.1n^{0.1}$ other agents may increase their level beyond $k$ wp $1-n^{-\eta'}$.
So in Case 2 the number of agents on levels not exceeding $k$ increases during period $[t',t]$
by at least $0.6n^{0.2}-2.1n^{0.1}.$

\noindent{\bf Case~3:}
assume that in parallel time $t'$ the number of agents on levels $j\le k$ is between $0.1n$ and $0.11n$
and also the number of agents on levels below $k$ is smaller than $3n^{0.1}$.
The probability of an interaction between one of such agents and 
an agent in set $Y$ of agents above level $k$
is at most $6n^{-0.9}$.
Any such interaction increases the number of agents on levels not exceeding $k$.
By Chernoff bound the probability that this number of interactions exceeds $7n^{0.2}$ in $[t',t]$ is negligible, i.e., at most $n^{-\eta'}$.
Thus in Case 3 the probability that the number of agents on levels at most $k$ exceeds $0.12n>0.11m+7n^{0.2}$
is negligible, i.e., at most $n^{-\eta'}$.

We now formulate Claim 1 that upperbounds the number of agents leaving levels $j\le k$
and Claim 2 that bounds from below the number of agents joining these levels
in Case 3.
Because wp $1-6n^{-\eta'}$  the levels $j\le k$ gain agents as a result of these two processes.
This will conclude the proof.

\noindent{\bf Claim~1:}
In Case 3 during parallel time period $[t',t]$ there are at most 
$0.26 n^{0.1}$ agents located at levels $j\le k$ which increment their level wp $1-4n^{-\eta'}$. 

And indeed, for as long as there are at most $0.12n$ agents on levels 
not greater than $k$,
the probability that such agent interacts as the initiator with a clock agent is at most $0.12/n$.
Such an interaction increments the level of this clock agent with probability at most $0.12/n$.
We prove that the probability of at least $0.13 n^{0.1}$ such incrementations
is negligible, i.e., at most $4n^{-\eta'}$.
Consider a sequence of $n^{1.1}$ zeros and ones in which
position $\iota$ is one if and only if either
\begin{itemize}
\item interaction $\iota$ increments initiator's level and there are at most $0.12n$ agents on levels 
not greater than $k$
\item if this number is greater than $0.38n$ value 1 is drawn with a fixed probability $0.12/n$.
\end{itemize}
By Chernoff bound this sequence has less than $0.13 n^{0.1}$ ones (1s) wp. $1-n^{-\eta'}$.
On the other hand we have at most $0.12n$ agents on levels at most $k$ wp $1-n^{-\eta'}$.
Thus wp $1-2n^{-\eta'}$ at most $0.13 n^{0.1}$ agents on levels not exceeding $k$ can increment
their levels in $[t',t]$ acting as initiators.
Analogously, we can prove that wp $1-2n^{-\eta'}$ at most $0.13 n^{0.1}$ agents on levels not exceeding $k$ can increment
their levels in $[t',t]$ acting as responders.
So altogether at most $0.26 n^{0.1}$ agents on levels $j\le k$ increment their levels during s$[t',t]$ wp $1-4n^{-\eta'}$.

\noindent{\bf Claim~2:}
In Case 3 during parallel time period  $[t',t]$ there are at least $0.75 n^{0.1}$
interactions between agents on levels $i<k$ and those residing on levels higher than $k$ wp $1-2n^{-\eta'}$. 

For as long as there are at most $0.12n$ agents on levels at most $k$,
at least $0.38n=n/2-0.12n$ agents are on levels higher than $k$.
The probability of interaction between
such agents and an agent on level $i<k$ is at least $0.76/n=2\cdot 0.38/n$.
Any such an interaction increases the number of agents on levels not exceeding $k$.
Consider a sequence of $n^{1.1}$ zeros and ones in which at
position $\iota$ is one (1) if and only if either
\begin{itemize}
\item there are at most $0.12n$ agents on levels not greater than $k$ and  interaction $\iota$ increases the number of such agents
\item the number of agents on levels up to $k$ is greater than $0.12n$ and value 1 is drawn with a fixed probability $0.76/n$.
\end{itemize}
By Chernoff bound this sequence has more than $0.75 n^{0.1}$ ones (1s) wp $1-n^{-\eta'}$.
On the other hand we have at most $0.12n$ agents on levels at most $k$ wp $1-n^{-\eta'}$.
Thus wp $1-2n^{-\eta'}$ at least $0.75 n^{0.1}$ agents on levels exceeding $k$ can reduce
their levels to at most $k$ during $[t',t]$ while acting as initiators.

Because of both Claims 1 and 2 after parallel time period $[t',t]$ there are
at least $0.51 n^{0.1} (=0.75 n^{0.1}- 0.24 n^{0.1})$ more agents on levels $j\le k$ than in parallel time $t'$.
This proves that in parallel time $t$ there are at least $0.1n$ agents on levels $j\le k$. 
\end{proof}

\begin{lemma}\label{clockMintime}
The parallel time in which the first agent achieves level $max$ is greater than
$(max-d-2)\cdot 0.4 n\ln n$ whp.
\end{lemma}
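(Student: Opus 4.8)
The plan is to lower–bound $T$ by tracking the smallest counter value present in the population, call it $\mu(t)$, which is nondecreasing in $t$ (the subpopulation holding the minimum can only shrink, so the minimum value only grows). First I would dispose of a time–window technicality: the whole run of interest has parallel length $O(n\log n)$ whp (the upper bound already furnished by Lemmas~\ref{formedges} and \ref{Lemma1}, repeated a constant $max$ number of times), which is comfortably inside the range where Lemma~\ref{l:01n} is valid (its proof keeps the failure probability negligible for parallel times well beyond $cn\log n$); and if $T$ should exceed that range there is nothing to prove. Granting this, the contrapositive of Lemma~\ref{l:01n} — ``an agent on a level below $k$ forbids level $max$'' — says that at time $T$ we must have $\mu(T)\ge max-d-2$ whp. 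Hence it suffices to lower–bound the parallel time needed for $\mu$ to climb through the $\Theta(max)$ successive integer levels $0,1,\dots,max-d-2$.

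Next I would isolate the mechanism. Inspecting the transition rules, before the first agent reaches $max$ the \emph{only} rule that raises an agent's counter is the connected ``timid counting'' rule applied along that agent's own matching edge, and it raises it by exactly one; moreover, while $\mu\equiv\ell$ no agent is ever placed on level $\ell$ (in every applicable rule the non-raised endpoint, resp. the responder, lands on level $\ge\ell+1$). Therefore, during the sub‑phase in which $\mu=\ell$, the set of agents occupying level $\ell$ is fixed at that sub‑phase's start and only shrinks, and $\mu$ advances to $\ell+1$ exactly when the random scheduler has selected the matching edge of every one of those agents. The key claim is that this occupant set has linear size at the start of each sub‑phase: Lemma~\ref{l:01n} guarantees that a linear subpopulation ($\ge 0.1n$ agents) sits on the bottom levels throughout, and during the previous sub‑phase the ``push‑down'' interactions continually deposit that crowd onto the new bottom level. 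Granting the claim, advancing $\mu$ past level $\ell$ forces the scheduler to select, afresh within that sub‑phase, a linear number of distinct matching edges; by the lower‑tail Chernoff–Janson computation of Lemma~\ref{Lemma1} (collecting the last $0.05n$ edges of a matching costs $\ge 0.4n\ln n$ parallel time whp, an estimate that only uses that the number of edges in question is linear) each such sub‑phase costs $\ge 0.4n\ln n$ parallel time whp.

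Finally I would assemble: the sub‑phases occupy pairwise–disjoint time intervals, there are at least $max-d-2$ of them (one per unit rise of $\mu$ from $0$ up to $max-d-2$, and $max$ is a constant), a union bound over these $O(1)$ sub‑phases keeps the total failure probability negligible, and summing the per‑sub‑phase bounds yields $T>(max-d-2)\cdot 0.4n\ln n$ whp. I expect the main obstacle to be the middle step — establishing that the bottleneck population on the \emph{single} level $\ell$ is linear at the start of sub‑phase $\ell$. Lemma~\ref{l:01n} only certifies a linear crowd on levels $\le k$, not on an individual level, so one has to argue quantitatively that the crowd pushed down onto the successive bottom levels is linear (balancing the inflow from push‑downs against the outflow caused by edge selections) and keep track of agents that have already raced ahead to higher levels; obtaining the per–single–level linear count is precisely what makes the coefficient come out as $(max-d-2)\cdot 0.4$ rather than merely some constant fraction of it, and it is where the ``freshness'' of the linear low‑level population re‑established at every level by Lemma~\ref{l:01n} is essential.
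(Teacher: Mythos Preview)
Your approach is correct and is essentially the paper's own argument: define $t_\ell$ as the first time the minimum counter reaches $\ell$, show that at $t_\ell$ a linear set of agents occupies level $\ell$, observe that each such agent can vacate $\ell$ only through its own matching edge, and invoke the lower-tail estimate of Lemma~\ref{Lemma1} to cost each of the $max-d-2$ sub-phases at $0.4n\ln n$.

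The obstacle you flag, however, is illusory, and working around it by ``balancing inflow from push-downs against outflow'' is unnecessary. The point you are missing is that Lemma~\ref{l:01n} is to be instantiated with $k=\ell$ separately for each sub-phase. One interaction before $t_\ell$ there is still an agent on level $\ell-1<\ell$, so Lemma~\ref{l:01n} with $k=\ell$ gives $\ge 0.1n$ agents on levels ${\le}\,\ell$; one interaction alters at most two counters, so at $t_\ell$ itself there are still $\ge 0.1n-O(1)$ agents on levels ${\le}\,\ell$. But $t_\ell$ is \emph{defined} as the first time no agent sits below $\ell$, so the set ``levels ${\le}\,\ell$'' collapses to the single level $\ell$, and you immediately have $\ge 0.1n$ agents on that one level. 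This is exactly how the paper obtains the count (``during period $[0.51,t_k]$ there are at least $0.1n$ agents on level $k$ or lower''), and it makes the coefficient $(max-d-2)\cdot 0.4$ fall out with no further work. Your remark that push-downs ``continually deposit that crowd onto the new bottom level'' is also off: the disconnected rule places the higher agent at level $i+1\ge\mu+1$, never at $\mu$, so the bottom level gains nothing from push-downs during its own sub-phase --- but, as just explained, it does not need to.
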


\begin{proof}
Let $t_k$ be the time when for the first time there are no agents available at levels lower than $k.$ By Lemma~\ref{l:01n} during period $[0.51,t_k]$,
there are at least $0.1 n$ agents on level $k$ or lower.
Let $n_k\in [n/4,n/2]$ be the number of edges in time $t_k$.
Thus between time $t_k$ and $t_{k+1}$ at least $0.1n$ agents
must increment their levels to $k+1$.
This is done by collecting (interacting via) edges adjacent to them.
By Lemma \ref{Lemma1} this takes parallel time at least $0.4 n\ln n$.
This process is repeated for $max-d-2$ levels when no agents
reach level $max$ whp.
\end{proof}

\begin{lemma}\label{clockMaxtime}
The first agent moves to level $max$ in parallel time $O(n\log n)$ whp.
\end{lemma}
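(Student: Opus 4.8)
The plan is to follow the smallest counter value present in the population and to show that it climbs by one level every $O(n\log n)$ parallel time. Since $max$ is a fixed constant — it is pinned down from $d$ and the $\Theta(n\log n)$ whp running time of slow leader election in the discussion preceding Lemma~\ref{formedges} — iterating the one‑level bound $max=O(1)$ times gives the claimed $O(n\log n)$ bound on the parallel time for the first agent to reach level $max$.

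First I would fix the parallel time $T_0=O(n\log n)$ by which all matching edges are formed whp (Lemma~\ref{formedges}). After $T_0$ no agent can re‑enter the counting stage: a counter at level $0$ is produced only by $<start>+<start>+0\longrightarrow<0>+<0>+1$, and after $T_0$ at most one agent (the leftover agent when $n$ is odd) remains in state $<start>$, which it can leave only through $<start>+<end>+0\longrightarrow<end>+<end>+0$, i.e.\ never to a counting level. Writing $m(t)$ for the smallest counter value carried by an agent at a level in $\{0,\dots,max\}$ at time $t$, I would then check that $m$ is non‑decreasing on $[T_0,\infty)$: each rule that rewrites a counter — the two timid‑counting rules, the maximum‑level epidemic, and conclude‑and‑disconnect — leaves both participating agents at counter values $\ge m(t)$ or sends them to state $<end>$, so no agent is ever created at a level in $\{0,\dots,max\}$ below the current minimum.

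Next, for $k=0,1,\dots,max$ let $\tilde t_k$ be the first time $t\ge T_0$ with $m(t)\ge k$, so that $\tilde t_0=T_0$. The core claim is
\[
\tilde t_{k+1}\ \le\ \tilde t_k + (\text{parallel time, counted from }\tilde t_k\text{, to realise an interaction along every matching edge}).
\]
Indeed, let $u$ be any agent and $e_u$ its unique incident matching edge. When $e_u$ is used (in a connected interaction) at some time $\ge\tilde t_k$, both its endpoints are at levels $\ge k$ by monotonicity, so afterwards $u$ is at a level $\ge k+1$ (or has moved to $<max>$ or $<end>$), and by monotonicity of $m$ it never again drops to a level $\le k$. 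Hence, once every matching edge has been used at least once after $\tilde t_k$, every agent still holding a counter is at level $\ge k+1$, i.e.\ $m\ge k+1$. Since the random scheduler is memoryless, the parallel time needed from $\tilde t_k$ to hit each of the $n'\in[n/4,n/2]$ matching edges has the same distribution as the edge‑collector time of Lemma~\ref{Lemma1}, namely $O(n\log n)$ whp. Summing over the $max=O(1)$ values of $k$ and taking a union bound over these $O(1)$ whp events together with the $O(n\log n)$ whp bound for $T_0$ yields $\tilde t_{max}=O(n\log n)$ whp; and $m$ cannot reach $max$ without some agent having reached level $max$, which proves the lemma. The leader based clock is handled by the same reasoning, with interactions along the virtual edges to the leader replacing matching‑edge interactions.

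The step I expect to be the main obstacle is making the monotonicity of $m$ watertight: one must exclude any re‑entry into the counting stage at level $0$ after the matching is complete, and account carefully for the $<max>$ and $<end>$ transitions and the odd‑$n$ leftover agent, so that ``$m(t)\ge k$'' is a genuinely one‑directional quantity. Everything past that point reduces to $O(1)$ applications of the edge‑collector bound of Lemma~\ref{Lemma1}, spliced together using memorylessness of the scheduler.
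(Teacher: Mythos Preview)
Your proposal is correct and takes essentially the same approach as the paper: after the matching is fully formed in $O(n\log n)$ parallel time (Lemma~\ref{formedges}), each full round of edge collection (Lemma~\ref{Lemma1}) raises the minimum counter by at least one, so $max=O(1)$ rounds suffice. You spell out the monotonicity of the minimum level and the odd-$n$ leftover agent explicitly, whereas the paper compresses the same argument into two sentences.
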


\begin{proof}
The total parallel time to initiate $\lfloor n/2\rfloor$ edges is $O(n\log n)$ whp
by Lemma \ref{formedges}.
If the first agent achieves level $max$ earlier the lemma remains true.
If this is not the case,  
the parallel time $O(n\log n)$ is determined by collection of all 
$\lfloor n/2\rfloor$ edges which needs 
to be repeated $max$ times resulting also in the total parallel time $O(n\log n)$.
\end{proof}

Now we are ready to prove Theorem \ref{t:counting}. 
The thesis for matching based clock follows directly from Lemmas \ref{clockMintime} and \ref{clockMaxtime}.
The thesis for the leader based clock can be proved by a sequence of lemmas almost identical to 
Lemmas \ref{l:01n}, \ref{clockMintime} and \ref{clockMaxtime}.
In the analog of Lemma \ref{l:01n} we can take $n-2$ followers instead of $n'$ edges.
This is because Lemma~\ref{formedges} assures that the parallel time counted by the matching based clock is long enough 
to form all edges whp.
Note that $n-2$ agents are initiated at level $0$ of the leader based clock in patallel  time $O(\log n)$ whp
by the epidemic resulting in dismantling of the matching based clock. And in turn we can use the initial parallel time $O(n\log n)$ instead of $0.51$ in the analog of Lemma \ref{l:01n}.

\section{Optimal line formation}\label{line}


\begin{theorem}\label{lowerbound}
    Any spanning line formation protocol operating whp needs 
parallel time $\Omega(n\log n).$
\end{theorem}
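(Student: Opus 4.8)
The plan is to fix an always‑correct spanning‑line protocol, run it under the random scheduler, and show that for a suitable constant $c>0$, with high probability the present‑edge graph is not a spanning line at any step before parallel time $cn\log n$; since a protocol that stabilises must pass through such a configuration, this proves the theorem. The first, routine, step is to reduce to protocols whose present‑edge graph stays a \emph{linear forest} (a vertex‑disjoint union of simple paths, isolated vertices allowed) throughout the window considered: a vertex of degree $3$ or a cycle can be removed only by deleting a specific incident edge, which costs $\Omega(n)$ parallel time in expectation, and since correctness must hold under every schedule the protocol cannot rely on such a sub‑structure being transient, so any step that would break the invariant can be charged to the later repair it forces. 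Under the linear‑forest invariant every edge addition is a \emph{merge} joining the endpoints of two distinct components, every edge deletion is a \emph{split}, and the number of components drops from $n$ to $1$, so at least $n-1$ merges occur; I classify them as singleton--singleton, singleton--path, and path--path, with counts $a$, $b$, $e$, and note that $2a+b=n$ (every isolated vertex is eventually absorbed), whence $e\ge a-1$.

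The heart of the argument is a \emph{slow‑progress} claim, of exactly the coupon‑collector flavour of Lemma~\ref{Lemma1}: at every step at most a constant number $C$ of non‑singleton components are \emph{active}, i.e.\ carry an endpoint in one of the constantly many states on which the transition function creates an edge. The reason is that the two endpoints of any one path are selected together $\Theta(\log n)$ times during $\Theta(n\log n)$ parallel time, so no path may ever carry two ``edge‑creating'' endpoints at once --- that would close a cycle --- and the two endpoints of a path sit at distance up to $n$, so after a merge the far ends cannot be re‑synchronised except by a signal costing $\Omega(n^2)$ parallel time; with only $O(1)$ states this blocks the obvious pairwise‑doubling strategy for fragments and forces an essentially sequential pattern in which a bounded number of growing fragments absorb the rest one piece at a time.

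Granting the slow‑progress claim, consider two cases. If $a\ge C'\log n$ for a constant $C'$ depending on $\eta$, then $e\ge C'\log n-1$ path--path merges must occur, and each is triggered only by one of the $\le\binom{2C}{2}=O(1)$ pairs of active endpoints in distinct components, so the waiting time between consecutive path--path merges stochastically dominates a geometric variable of mean $\Omega(n^2)$; their sum has mean $\Omega(n^2\log n)$, and the lower‑tail Chernoff--Janson bound (Theorem~3.1 of \cite{geometric}), applied exactly as in the proof of Lemma~\ref{Lemma1}, makes it $\Omega(n^2\log n)$ whp. If instead $a< C'\log n$, then $b\ge n-2C'\log n$ singleton--path merges must occur, and since the protocol now declines all but $O(\log n)$ singleton--singleton merges, the number $s$ of isolated vertices decreases --- for all but $O(\log n)$ of its values --- only through singleton--path merges; while $s=j$ such a merge is triggered only by one of $O(Cj)=O(j)$ pairs (active endpoint, isolated vertex), so the time spent at $s=j$ stochastically dominates a geometric variable of mean $\Omega(n^2/j)$, and summing over the $\ge n-O(\log n)$ relevant values of $j$ gives total mean $\sum_j\Omega(n^2/j)=\Omega(n^2\log n)$, again $\Omega(n^2\log n)$ whp by the same tail bound. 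In either case $\Omega(n^2\log n)$ interactions, i.e.\ parallel time $\Omega(n\log n)$, elapse before a spanning line can appear.

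The main obstacle is the slow‑progress claim. Turning ``$O(1)$ memory cannot keep more than $O(1)$ fragments safely merge‑ready'' into a rigorous bound on the per‑step merge probability is delicate: one must rule out adaptive state‑cycling constructions of a $\mathbb{Z}_3$‑parity flavour that seem to parallelise merging but in fact get stuck with a linear number of mutually unmergeable fragments; one must treat isolated‑vertex merges correctly (they reduce the component count yet create fresh fragments rather than extending old ones, so the relevant ``coupon'' is the fragment, not the isolated vertex); and one must absorb splits --- a burst of splits creates extra isolated vertices but each split itself costs $\Omega(n^2)$ interactions, so many splits directly imply the bound --- together with the short transient before the linear‑forest invariant settles. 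Everything else, namely the geometric‑sum stochastic domination and its tail bound, is precisely the machinery already developed for the clock in Section~\ref{s:analysis}.
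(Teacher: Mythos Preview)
Your proposal has a genuine gap at the slow-progress claim, which you correctly flag as the main obstacle but do not prove. As stated it is false: nothing prevents a protocol from maintaining linearly many non-singleton paths whose endpoints are simultaneously in edge-creating states. Your justification that two edge-creating endpoints on one path ``would close a cycle'' is not a contradiction---each could merge with a \emph{different} component---and the re-synchronisation argument conflates information flow along a path with the local state of its endpoints, which the transition function may set directly at merge time without any signal traversing the path. Without the slow-progress bound, the per-step merge probability estimates in both of your cases ($O(1)/\binom{n}{2}$ for path--path, $O(j)/\binom{n}{2}$ for singleton--path) are unsupported and the geometric-sum arguments collapse. The linear-forest reduction is likewise not established: charging an $\Omega(n)$ expected repair cost to each degree-$3$ vertex or cycle does not by itself yield a whp lower bound on the time to first reach a spanning line, and ``correctness must hold under every schedule'' does not forbid a protocol from passing through such substructures.

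The paper's proof sidesteps all of this by examining only the \emph{last} edge change before stabilisation. Whatever the protocol does earlier, the configuration immediately preceding the final spanning line (or, if that predecessor is a Hamiltonian cycle, the configuration preceding the cycle) is one of five fixed shapes---two disjoint paths, a path with one extra edge at an endpoint (``buffalo''/``unicorn''), a spanning path, or a cycle with a chord---and from each of these at most four specific pairs of agents can execute the completing transition. Hence at every step the conditional probability of such a bottleneck transition is at most $8/(n(n-1))$, so the probability that none occurs within $Tn$ interactions is at least $\bigl(1-8/(n(n-1))\bigr)^{Tn}$, which exceeds $n^{-\eta}$ whenever $T<\tfrac{\eta}{10}(n-1)\ln n$. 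This single last-step observation replaces your entire component-tracking apparatus.
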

\begin{proof}
    The final spanning line configuration must be preceded by one of the three critical configurations including A) {\em two lines}, where one of four edges could be inserted to form a line, B) {\em buffalo}, where one specific edge needs to be removed, or C) {\em unicorn}, where one of the two edges needs to be removed. 
    Alternatively, the final line configuration is obtained from cycle configuration from which one edge is removed. In such case we consider the only two cycle preceding configurations including D) {\em line}, where a unique edge need to be inserted, or E) {\em chord}, where specific chord needs to be removed, see Figure~\ref{f:configs}. Thus to stabilise in the final spanning line configuration, the protocol has to go through one of the {\em bottleneck} transitions having a choice of a fixed (at most $4$) number of edges. This limited choice forces $\Omega(n\log n)$ parallel time if we insist on high probability.  

\begin{figure}[ht]
    \centering
    \includegraphics[scale=0.4]{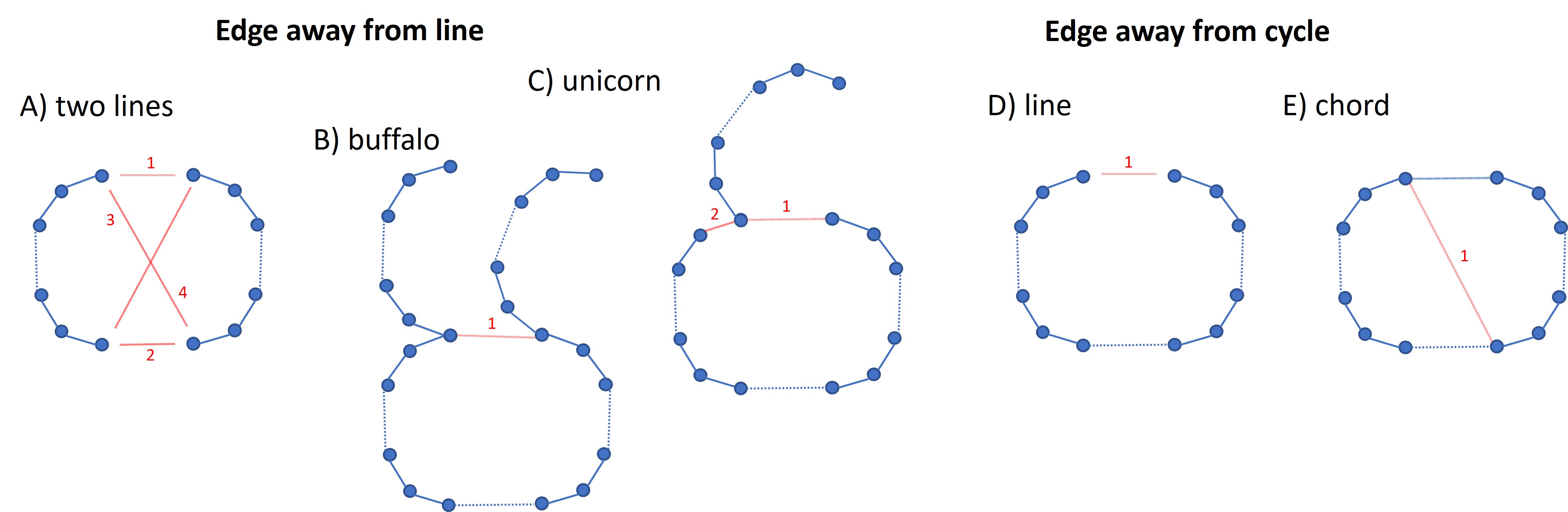}
    \caption{Configurations leading to line and cycle}
    \label{f:configs}
\end{figure}    

A more formal argument follows.
Let $T$ be the parallel time required to stabilise in the final spanning line configuration whp.
As indicated in the main part of the paper, stabilising in such configuration 
requires passing through a bottleneck transition.
At any time, when in a critical configuration,
the probability of choosing at random a pair of agents which enables a bottleneck transition is at most $4/{\binom{n}{2}}=\frac{8}{n(n-1)}$, since there are at most four such pairs for each bottleneck transition. Assume $T<\eta (n-1)\ln n/10$, where $\eta$ is the parameter of whp requirement.
The probability that the algorithm fails in this time is not smaller than the probability of no bottleneck transition which is at least
\[\left(1-\frac{8}{n(n-1)}\right)^{\eta n(n-1)\ln n/10}\sim n^{-0.8\eta}>n^{-\eta}.
\]
Thus also the probability of a spanning line formation  protocol not stabilising
in parallel time $T$ is larger than $n^{-\eta}$, for $n$ large enough.
\end{proof}

\noindent{\bf Time/space optimal line formation}
We define and analyse a new optimal line formation protocol which operates in time $\Theta(n\log n)$ whp. while utilising a constant number of extra states (not mixed with other protocols including clocks).
The protocol is preceded by leader election confirmed by the matching based clock.  
And when this happens, the periodic leader based clock starts running together with the following line formation protocol based on two main rules defined below.

\vspace*{0.1cm}\noindent{\bf Form head and tail}
\[L+F+0 \rightarrow H+T+1
\]
This rule creates the initial head in state $H$ and the tail in state $T$ of the newly formed line.
Note that since the line formation process uses separate memory the leader in the leader based clock remains in the leadership state, i.e., it is the head state $H$ is used solely in the line formation protocol.

\vspace*{0.1cm}\noindent{\bf Extend the line}
\[H+F+0 \rightarrow R+H+1
\]
This rule extends the current line by addition of an extra agent from the head end of the line. The state $R$ indicates that the agent is in the line between the head and the tail.

\begin{theorem}
The spanning line formation protocol stabilises whp  in parallel time $O(n\log n)$.
\end{theorem}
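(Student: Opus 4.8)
The plan is to track the essentially deterministic combinatorial structure grown by the two rules and to reduce the running time to a coupon-collector-style sum of independent geometric random variables, which is then controlled by the Chernoff--Janson tail inequality. Line formation only starts once leader election has been confirmed by the matching based clock, which by Theorem~\ref{t:counting} (together with the slow leader election bound) takes parallel time $O(n\log n)$ whp; and incurring only a further $O(n\log n)$ parallel time whp for the matching based clock to finish dismantling its matching (by the edge-collector argument of Lemma~\ref{Lemma1}), we may assume that line formation begins from a configuration with a unique leader $L$, every other agent in state $F$, and every edge flag equal to $0$. The rule $L+F+0\rightarrow H+T+1$ then fires with probability $\Theta(1/n)$ per interaction, hence within $O(n\log n)$ interactions, i.e. $O(\log n)$ parallel time, whp. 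From that point on the non-clock states form a path $T\!-\!R\!-\!\cdots\!-\!R\!-\!H$ with a unique head $H$; since $H$ is incident only to its path neighbour, its edge flag with any $F$ is $0$, so every interaction whose ordered pair is $(H,F)$ fires $H+F+0\rightarrow R+H+1$, lengthens the path by one and keeps $H$ unique, while every other interaction involving $H$ (with an $R$ or with $T$) is inert. The process therefore ends exactly when the last $F$ is absorbed, producing a spanning line on which no rule applies, so stability and correctness are immediate.

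For the timing, observe that when $k$ free agents remain the probability that the next interaction lengthens the line is $p_k=\frac{k}{n(n-1)}$, so the number of interactions $X_k$ spent going from $k$ to $k-1$ free agents is geometric with mean $\frac{n(n-1)}{k}$, and the $X_k$ are independent. The total number of line-extension interactions is $\sum_{k=1}^{n-2}X_k$, of expectation $M=n(n-1)H_{n-2}\sim n^2\ln n$, i.e. expected parallel time $\sim n\ln n$. For the whp bound I would apply Theorem~2.1 of~\cite{geometric} with $p_*=\min_k p_k=\frac{1}{n(n-1)}$ and this $M$: since $p_*M=H_{n-2}\sim\ln n$, the probability that $\sum_{k=1}^{n-2}X_k$ exceeds $\lambda M$ is at most $e^{-p_*M(\lambda-1-\ln\lambda)}\sim n^{-(\lambda-1-\ln\lambda)}$, which is below $n^{-\eta}$ once $\lambda$ is a sufficiently large constant. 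Hence line formation uses $O(n^2\log n)$ interactions, i.e. $O(n\log n)$ parallel time, whp; adding the $O(n\log n)$ parallel time already charged to leader election and its confirmation, and adjusting the reliability constant for the constantly many union-bounded events in the usual way, yields the claimed bound.

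The argument is largely routine, so the main effort is bookkeeping rather than a new idea: one must check that the concurrently running periodic leader based clock, the already completed slow leader election, and the matching based clock occupy disjoint memory and never perturb the head--tail path or the edge flags it relies on, and one must instantiate the Chernoff--Janson inequality with the correct $p_*$ and $M$ so that $p_*M=\Theta(\ln n)$ delivers a genuinely polynomial failure probability. No potential-function or epidemic-style reasoning beyond what is already established in Theorem~\ref{t:counting} and Lemma~\ref{Lemma1} is needed here.
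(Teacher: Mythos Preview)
Your proposal is correct and follows essentially the same approach as the paper: both reduce the line-growth phase to a sum of independent geometric random variables with mean $\Theta(n^2\ln n)$ and invoke the Chernoff--Janson bound for the whp guarantee. You supply considerably more surrounding detail (the correctness of the path structure, the preceding leader-election and clock-dismantling phases, and explicit Chernoff--Janson parameters), whereas the paper's proof is a terse three-line computation of the expectation followed by a one-line appeal to Chernoff--Janson; the only minor discrepancy is that the paper uses success probability $\tfrac{2(n-i)}{n(n-1)}$ (counting both orderings of the $(H,F)$ pair) where you use $\tfrac{k}{n(n-1)}$, a harmless constant factor.
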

\begin{proof}
The probability of an interaction adding agent $i+1$ to the line when $i$ agents are already present is
$\frac{2(n-i)}{n(n-1)}$.
Such interactions has geometric distribution with the expected value
$\frac{n(n-1)}{2(n-i)}$.
Thus the expected parallel time of forming the line is
\[\frac{1}{n}\sum_{i=1}^n \frac{n(n-1)}{2(n-i)}\sim 
  \frac{n}{2}\sum_{i=1}^n \frac{1}{i}\sim \frac{n\ln n}{2}.
\]
By Chernoff-Janson bound this parallel time $O(n\log n)$ is guarantied whp.
\end{proof}

In order to make the line formation protocol always correct we need some backup rules for the unlikely case of desynchronisation when two or more leaders survive to the line formation stage. In such case we need to continue leader elimination.
\[L+L+0\ \longrightarrow\ L+F+0
\]
Also when a leader meets already formed head. 
\[L+H+0 \rightarrow F+H+0
\]

Finally we have to dismantle excessive lines if two or more lines are formed. This is done using extra state $D$ which dismantles the line edge by edge starting from the head.
\[H+H+0 \rightarrow H+D+0
\]
\[D+R+1 \rightarrow F+D+0
\]
\[D+T+1 \rightarrow F+F+0
\]



\section{Probabilistic bubble-sort\label{bubble-sort}}

Let array $A[0..n-1]$ contain an arbitrary sequence of $n$ numbers. 
In the {\em probabilistic bubble-sort} during each comparison step an index $j\in\{0,\dots,n-2\}$ is chosen uniformly at random, and if $A[j]>A[j+1]$ these two values are swapped in $A$.
We show that the expected number of comparisons required to sort all numbers in $A$ (in the increasing order) is $\Theta(n^2)$ whp.

In order to prove this result we first remind the reader that any sorting procedure based on fixing local inversions requires $\Omega(n^2)$ comparisons. In order to prove the upper bound we utilise the
classical {\em zero-one principle} stating that 
if a (probabilistic) sorting network sorts correctly all sequences of zeros and ones,
it also sorts an arbitrary sequence of numbers of the same length.
More precisely, if we want to prove that a given sequence $X$ of $n$ numbers will be sorted 
we have to consider only  $n+1$ zero-one sequences obtained 
by replacing $k$ largest elements of $X$ by ones and the remaining elements by zeros, for any $k=0,\dots,n$, see~\cite{Knuth73}.
Thus it is enough to prove that 
the probabilistic bubble-sort utilises $O(n^2)$ comparisons
to sort whp any zero-one sequence of length $n$,
and later use the union bound to extend this result to 
any sequence of numbers, also whp.


\begin{theorem}\label{t:ones}
The probabilistic bubble-sort 
utilises $4(n-1)(n\ln 2 +\eta\ln n)$ comparisons whp to sort 
any zero-one sequence of size $n.$
\end{theorem}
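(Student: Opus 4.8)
The plan is to track, for a fixed zero–one input, the motion of the $k$ ones through the array and argue that after $O(n^2)$ comparison steps the sequence is sorted. By the zero–one principle it suffices to bound the time to sort each of the $n+1$ zero–one sequences (with $k$ ones and $n-k$ zeros) and then take a union bound over $k$; since we want a $\frac{1}{n}$ overhead worth of slack, proving each sorts whp with failure probability $n^{-\eta-1}$ will do. For a fixed $k$, the sorted configuration has the $k$ ones occupying positions $n-k,\dots,n-1$. I would argue that each chosen swap at index $j$ either moves a one one step to the right (when $A[j]=1$, $A[j+1]=0$) or does nothing, so the ones never move left; hence the sequence is sorted exactly when the rightmost empty "target" slot can no longer receive a one, i.e. when all inversions are gone. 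The natural progress measure is the total displacement deficit $\Phi = \sum_{\text{ones}} (\text{target position}) - (\text{current position})$, a nonnegative integer at most $k(n-k)\le n^2/4$ initially, which strictly decreases by one on every "useful" swap and is never increased.

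The core of the argument is then a lower bound on the probability that a uniformly random index $j\in\{0,\dots,n-2\}$ triggers a useful swap, given that $\Phi>0$. When the sequence is unsorted there is at least one adjacent pair $(1,0)$, so at least one of the $n-1$ indices is useful, giving success probability $\ge \frac{1}{n-1}$ per step. That alone yields only an $O(n^3\log n)$-type bound, which is too weak, so I would instead run the analysis one "one" at a time, in the spirit of the line-formation and edge-collection arguments earlier in the paper: consider the rightmost one that is not yet in its target slot; the index immediately to its right is either occupied by a one already in place (in which case shift attention to the next one to settle) or is a zero, making that index a useful index for that particular one. Settling the $i$-th one from the right requires it to traverse at most $n-1$ positions, and at each step it has probability $\ge \frac{1}{n-1}$ of being the chosen index while it is the "active" one. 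So the number of comparison steps to settle one particular one is dominated by a sum of at most $n-1$ geometric random variables each with success probability $\frac{1}{n-1}$, with mean at most $(n-1)^2$; summing over all $k\le n$ ones gives expected total $O(n^3)$ — still the wrong bound.

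To get the correct $\Theta(n^2)$ I would sharpen the accounting: at any moment when $\Phi>0$, the number of distinct useful indices equals the number of maximal blocks of ones that are not yet flush against the right wall, and — more to the point — each of the $\Phi$ units of deficit is "carried" by some specific (one, gap) incidence that is eliminated the first time its own useful index is chosen. Concretely, I would set up, for a fixed target sequence, a system of $\Phi_0 \le (n-1)\ln 2$-scaled... rather, I would bound $\Phi_0\le k(n-k)$ and note $\sum_k$ is dominated by the $k\approx n/2$ term. The clean way: for each of the at most $n$ ones and each of the at most $n$ positions it must pass through, assign a geometric clock of rate $\frac{1}{n-1}$ that must "fire" (be selected) once; a one advances whenever its current clock fires. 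The total number of firings needed is $\sum_{\text{ones}} (\text{distance to target}) = \Phi_0 \le k(n-k)$, and by the Chernoff–Janson tail bound (Theorem 2.1 of \cite{geometric}) applied to this sum of at most $n^2/4$ i.i.d.\ geometrics of rate $\frac{1}{n-1}$, the number of steps exceeds $\lambda (n-1)k(n-k)/1$... with the stated constant, one gets that $4(n-1)(n\ln 2+\eta\ln n)$ steps suffice with failure probability $n^{-\eta-1}$; here the $n\ln 2$ arises because $\sum_{k=0}^{n}\binom{n}{k}$-type counting, or rather the union over the $n+1$ sequences together with $\max_k k(n-k)$, produces the $\ln 2$ factor once the Janson exponent is unpacked. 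Taking the union bound over the $n+1$ values of $k$ then gives the theorem.

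The main obstacle is making the per-one independence rigorous: the "active one" changes over time and the clocks of different ones are driven by the same random index stream, so the sum-of-geometrics picture is a coupling claim rather than an identity. I expect the fix to be a domination argument — show that the real process is stochastically faster than the idealized one in which each required advance waits for an independent fresh geometric — after which the Chernoff–Janson bound applies verbatim and the constant $4$ absorbs the coupling slack.
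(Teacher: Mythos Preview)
Your approach diverges from the paper's and, as written, does not close. The paper does \emph{not} track the linear displacement $\Phi=\sum_i(\text{target}-\text{position})$; instead it introduces an exponential potential
\[
P(C)=\sum_{i\in C}\bigl(2^{\,n-k+2l-i}-2^{\,l}\bigr),\qquad l=|C\cap\{0,\dots,i-1\}|,
\]
and proves in one short lemma that a single random comparison multiplies the expected potential by at most $1-\tfrac{1}{4(n-1)}$. Since $P(C_0)<2^n$, iterating gives $EP(C_t)<2^n\exp\!\bigl(-t/(4(n-1))\bigr)$, and setting this below $n^{-\eta}$ yields exactly $t\ge 4(n-1)(n\ln 2+\eta\ln n)$. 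This is the source of the $n\ln 2$ term; it has nothing to do with a union over $k$ or with $\max_k k(n-k)$.

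The reason your linear $\Phi$ stalls at $O(n^3)$ is structural, not a matter of coupling slack. On the input $1^k0^{n-k}$ there is exactly one useful index, so $\Phi$ drops by $1$ with probability $1/(n-1)$; with $\Phi_0=k(n-k)\asymp n^2/4$ this already forces $\Theta(n^3)$ expected steps under the $\Phi$-accounting. The exponential potential escapes this because moving the right endpoint of a block of ones cuts that block's contribution roughly in half, so even a single rare swap buys a constant multiplicative drop in $P$. Your proposed rescue --- dominating the real process by an ``idealized'' one in which each one's advances wait for independent fresh geometrics --- points the wrong way. Under the natural coupling on a shared index stream (each one in the idealized process moves whenever its current index is drawn, blocking ignored), one checks by induction that the $m$-th one from the right satisfies $X_m^{\text{real}}(t)\le X_m^{\text{ideal}}(t)$ for all $t$: blocking can only delay. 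Hence the real process is stochastically \emph{slower} than the non-blocking one, and an $O(n^2)$ bound for the idealized max-over-ones does not transfer back. You need a genuinely different progress measure, and the paper's is the exponential weight.
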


Let $k$ be the number of ones in a zero-one sequence represented by $A$.
We define a {\em configuration} $C$
as the subset of all positions in $A$ 
at which ones are situated, where $|C|=k$.
The probabilistic bubble-sort starts in the initial configuration (based on the original zero-one sequence) and thanks to the conditional swaps progresses through consecutive configurations including the final one in which all zeros precede $k$ ones.
For any configuration $C,$ 
we define a {\em potential function} $P(C)=\sum_{i\in C} P[i]$ with a non-negative integer value, 
where
\[P[i]=2^{n-k+2l-i}-2^l\mbox{ , for } l=| C\cap\{0,\dots, i-1\}|.
\]


Note that the value of this potential is zero 
for all $i$
if and only if the sequence is sorted. Thus $P(C)=0$ for a sorted sequence $C$.
Also, when all ones precede all zeros, the potential $P(C)$ is the highest possible. One can notice that always $P(C)<2^n$.

We prove the following lemma.

\begin{lemma}\label{bubble}
Let $C$ be an arbitrary configuration in $A$ and
$EP(C')$ be the expected potential of the next configuration $C'$
in the probabilistic bubble-sort.
The following inequality holds.
\[EP(C')\le \left(1-\frac{1}{4(n-1)}\right) P(C). 
\]
\end{lemma}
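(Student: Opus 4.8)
The plan is to compute the expected change in potential by summing, over the $n-1$ equally likely choices of adjacent index $j$, the change $P(C')-P(C)$ contributed by that comparison, and to show this expected change is at most $-\frac{1}{4(n-1)}P(C)$. Only positions $j$ with $A[j]=1$ and $A[j+1]=0$ cause a swap; every other choice leaves the configuration fixed. So the first step is to identify the \emph{descent positions} of the configuration — maximal blocks of ones followed immediately by a zero — and note that the number of index choices producing a genuine swap equals the number of such blocks, call it $b\ge 1$ whenever $C$ is unsorted. For each such $j$, moving the rightmost one of a block from position $j$ to position $j+1$ changes exactly one summand: the term $P[j]$ (with its local count $l$) is replaced by the term for the one now at position $j+1$ (with the same $l$, since the set of ones strictly to its left is unchanged). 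Using the closed form $P[i]=2^{n-k+2l-i}-2^l$, the swap replaces $2^{n-k+2l-j}-2^l$ by $2^{n-k+2l-(j+1)}-2^l$, a decrease of exactly $2^{n-k+2l-j-1}=\tfrac12\bigl(P[j]+2^l\bigr)\ge \tfrac12 P[j]$.

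The second step is to bound the total decrease from below by a constant fraction of $P(C)$. Summing the per-block decreases gives $P(C)-P(C')\ge \tfrac12\sum_{\text{blocks}} P[j_{\text{block}}]$ where $j_{\text{block}}$ is the rightmost one in that block. The key observation is that within a single maximal block of consecutive ones, the potentials $P[i]$ grow geometrically as $i$ \emph{decreases} (the exponent $n-k+2l-i$ increases by $1$ for each step left, since $l$ is constant along the interior of a block where the one immediately left is still in $C$) — more precisely $P[i-1]+2^l = 2\,(P[i]+2^l)$, so the augmented potentials double going left. Hence the rightmost one of a block carries at least half of the augmented potential mass of that entire block, and summing over all blocks, $\sum_{\text{blocks}} (P[j_{\text{block}}]+2^{l}) \ge \tfrac12 \sum_{i\in C}(P[i]+2^{l_i}) \ge \tfrac12 P(C)$. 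Combining with the factor $\tfrac12$ from the single-swap estimate yields a total decrease of at least $\tfrac14 P(C)$, conditioned on hitting a swapping index. Since each of the $n-1$ indices is chosen with probability $\tfrac1{n-1}$, the expected decrease is at least $\frac{1}{4(n-1)}P(C)$, giving $EP(C')\le\bigl(1-\frac{1}{4(n-1)}\bigr)P(C)$ as claimed.

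I would need to be careful about two bookkeeping points. First, the definition of $l$ for a summand after the swap: when the rightmost one of a block hops right, for that single one the quantity $l=|C\cap\{0,\dots,i-1\}|$ is genuinely unchanged (the ones to its left are exactly the other ones of its block plus everything further left), so no other summand's exponent moves — this is what makes the per-comparison analysis clean and is worth stating explicitly. Second, the geometric-growth comparison is cleanest if phrased in terms of the shifted quantities $Q[i]:=P[i]+2^{l_i}=2^{n-k+2l_i-i}$, which are pure powers of two; then "augmented potential doubles going left within a block" is immediate, and $P(C)=\sum_i Q[i]-\sum_i 2^{l_i}\le \sum_i Q[i]$, so bounding $\sum_{\text{blocks}}Q[j_{\text{block}}]\ge\tfrac12\sum_i Q[i]\ge \tfrac12 P(C)$ suffices.

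The main obstacle is the second step — correctly formalising that the rightmost one in each descent block dominates (up to a factor $2$) the summed potential of the whole block, and that these blocks partition $C$. The algebra is routine once one works with $Q[i]=2^{n-k+2l_i-i}$ and observes that along a run of consecutive ones the exponent $n-k+2l_i-i$ is constant in $i$ minus a shift — wait, more carefully: within a block $l_i$ increases by $1$ per step right while $i$ also increases by $1$, so $2l_i-i$ increases by $1$ per step right, i.e.\ $Q[i]$ \emph{increases} rightward along a block, meaning the rightmost one carries the \emph{largest} $Q$-value of its block, which is at least half the block's $Q$-sum by the geometric-series bound. Getting this monotonicity direction right (and handling the zeros separating blocks, which reset nothing since they are not in $C$) is where the care is needed; everything else is a direct substitution into the explicit formula.
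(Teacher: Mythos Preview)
Your approach is correct and is essentially the paper's: decompose $C$ into maximal blocks of consecutive ones, and for each block $B$ with right endpoint $y$ show that the swap at $y$ (which fires with probability $\tfrac{1}{n-1}$) drops the potential by exactly $\tfrac12 Q[y]\ge \tfrac14 P(B)$, then sum over blocks. The paper bounds $P(B)<2\cdot 2^{n-k+2l-y}=2Q[y]$ directly via the finite geometric sum, which is the same computation as your observation that $Q$ doubles rightward along a block.

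Two points to clean up. First, your paragraph-two claim that ``$l$ is constant along the interior of a block'' and hence $Q$ doubles going \emph{left} is wrong --- inside a block $l_i$ increases by $1$ with each step \emph{right}, so $2l_i-i$ increases and $Q$ doubles going \emph{right}, exactly as you correct yourself in the final paragraph; drop the earlier version. Second, you should handle the block (if any) ending at position $n-1$, where no swap is possible: for such a block $l_i=k-(n-i)$, so $n-k+2l_i-i=l_i$ and $P[i]=0$ throughout, hence this block contributes nothing to $P(C)$ and can be ignored in the sum. The paper notes this explicitly.
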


\begin{proof}
We split configuration $C$ 
into disjoint blocks
of indices $B_1,B_2,\ldots$, each corresponding to a solid run of ones.
For any block $B=\{x,\dots,y\}$
we define a potential 
$P(B)=\sum_{i=x}^y P[i]$.
In the subsequent configuration $C',$ 
let $EP(B')$ be the expected potential of $B'\subset C'$ based on the ones originating from $B$ in the preceding configuration $C$. 
We show that 
\[ EP(B')\le \left(1-\frac{1}{4(n-1)}\right) P(B). 
\]

Let $l=|C\cap\{0,\dots, y-1\}|$.
We have
\[P(B)= \sum_{i=x}^{y} P[i] = \sum_{i=x}^{y}2^{n-k+2(l+i-y)-i}- \sum_{i=x}^{y} 2^{l+i-y}< 2^{n-k+2l-y+1}
\]


Assume first that $y=n-1$. The inequality follows from the fact that $P(B)=P(B')=0 $
as ones located at positions in $B$ cannot be moved any further.
Thus we can assume that $y<n-1$.
Now, as either $B'=B$ or $B'=\{x,\dots,y-1\}\cup\{y+1\}$ 
and the latter happens with probability $\frac1{n-1},$ we get
\[P(B')= \sum_{i=x}^{y-1}2^{n-k+2(l+i-y)-i}- \sum_{i=x}^{y} 2^{l+i-y}+2^{n-k+2l-y-1}=
\]
\[=P(B)-2^{n-k+2l-y-1}\le \frac34 P(B).
\]

And in turn
\[EP(B')\le \left(1-\frac{1}{n-1}\right)P(B)+\frac1{n-1}\cdot\frac{3}{4} P(B)=\left(1-\frac{1}{4(n-1)}\right) P(B). 
\]

Note that any configuration $C$ is the union of disjoint blocks $B_i$ and
$P(C)=\sum_i P(B_i)$, thus also

\[EP(C')=\sum_{i} EP(B_i') \le \sum_i \left(1-\frac1{4(n-1)}\right)P(B_i) = \left(1-\frac1{4(n-1)}\right)P(C)
\]

\end{proof}

The initial value of $P(C_0)$ is bounded by $2^n$.
When after $t$ random comparisons $EP(C_t)\le n^{-\eta}$ 
the sequence is sorted whp.
This holds because the probability that after $t$ random comparisons the 
sequence is not sorted is equal to the probability that the potential is greater than zero (i.e., at least 1 as the potential is always integral). This probability is less than or equal to $EP(C_t)$ which is not bigger than $n^{-\eta}$.

Let $c = \left(1-\frac{1}{4(n-1)}\right)$.
Let also $ P(C_j)$ and $P(C_{j+1})$ be the potentials of the configurations separated by the $j$th consecutive comparison.
We have shown earlier that $EP(C_{j+1})$, conditioned on the value of $P(C_j)$, is at most $c\cdot P(C_j)$. 
This implies that the unconditional value of $EP(C_{j+1})$ is at most $c\cdot EP(C_j)$.
Thus by an induction argument it follows that after $t$ random comparisons
$EP(C_t)$ is at most ${c^t}\cdot EP(C_0)$.
Finally as $EP(C_0) = P(C_0),$ where $C_0$ is the initial configuration and its potential is not a random variable, in order to estimate $t$ we get
inequality
\[EP(C_t)\le \left(1-\frac{1}{4(n-1)}\right)^t P(C_0)\le \exp\left(-\frac{t}{4(n-1)}\right)2^n\le n^{-\eta}, 
\]
which holds for $n\ln 2+\eta\ln n\le \frac{t}{4(n-1)},$
equivalent with
\[t\ge 4(n-1)(n\ln 2 +\eta\ln n).
\]
This concludes the proof of Theorem~\ref{t:ones}.

\section{Strand self-replication}\label{replic}

In this section we propose and analyse the first self-replication mechanism allowing efficient concurrent reproduction of many, possibly different, {\em strands} (line-segments) of agents carrying non-fixed size $0/1$ bit-strings.
Note that replication of fixed-size bit-strings is trivial as they
can be encoded in the state space utilised by agents.
The front agent in the strand is called {\em the head}, the last one is called {\em the tail}, and the remaining ones are called {\em regular} or {\em internal} agents.
%

The strand self-replication protocol mimics the pipelining process utilised 
and analysed in the probabilistic bubble-sort algorithm in Section~\ref{bubble-sort}.
There are, however, some small differences between the two processes.
In particular, during strand self-replication, 
the transfer (pipelining) of consecutive bits of information 
between the old and the new strand is done 
at the same time as the new strand is being constructed.
Also any bit transferred to the new strand stops moving as soon as
it finds the first unused (newly added) agent in the new strand.
Finally, the probability of using an edge in the strand is 
${1}/{\binom{n}{2}}$ comparing to the probability ${1}/{k}$
of choosing any pair of numbers in the probabilistic bubble-sort applied to sequence of size $k$.
In the proof of Theorem~\ref{replicate} we point out that only the last difference 
separates the self-replication process by a multiplicative factor $\Theta(n^2/k)$ from 
bubble-sort applied to a sequence with $k$ ones on the left and $2k$ zeros at the right end. 

{\bf The Algorithm}
When a strand is ready for self-replication it first creates a copy of its head, then pushes through this new head (one by one, preserving the order) the bits of information pulled from its own agents.
At the same time, in order to accept the incoming bits of information, first the new head and later the copies of the consecutive 
regular agents of the replicated strand extend the new strand until the tail is formed. 
When the last (tail) bit of information is delivered to the new strand, the edge bond bridging the two heads is removed and the original (old) strand is ready for the next round of self-replication. 
Note that in this version of the self-replication protocol a newly formed strand may simultaneously seek its tail extension and already be involved in self-replication from its head end.  

\begin{theorem}\label{replicate}
The strand self-replication protocol recreates a $k$-bit strand 
in parallel time $O(n(k+\log n))$ whp. 
\end{theorem}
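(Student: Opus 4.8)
The plan is to reduce the analysis of strand self-replication to the potential-function argument already developed for probabilistic bubble-sort in Section~\ref{bubble-sort}, and then separately account for the two additional costs that do not arise there: the cost of physically constructing the new strand agent by agent, and the slowdown caused by the fact that an edge of a fixed strand is selected only with probability $1/\binom{n}{2}$ rather than $1/k$. First I would set up the construction phase: the new head is created from the old head in expected parallel time $O(n)$ (a single designated edge must fire), and thereafter each of the $k-1$ further agents of the new strand is appended by a rule analogous to the line-extension rule of Section~\ref{line}. As in the proof of the spanning line formation theorem, the probability of appending agent $i+1$ once $i$ are present is $\Theta((n-i)/n^2)$, so the expected parallel time to grow the new strand to length $k$ is $\sum_{i=1}^{k} \Theta\!\left(\frac{n^2}{n(n-i)}\right) = O(n\log n)$ when $k \le n$, and this is $O(n\log n)$ whp by a Chernoff--Janson bound exactly as before; note this is dominated by $O(n(k+\log n))$.

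Next I would handle the bit-pipelining phase, which is where the potential argument of Lemma~\ref{bubble} is reused. The configuration is the set of positions in the (growing) new strand occupied by a $1$-bit still in transit, together with the analogue of the trailing block of ``not yet delivered'' slots; the key structural fact to check is that the update rule for a transiting bit is exactly the block-advance move analysed in Lemma~\ref{bubble} — a bit either stays put or moves one position toward its destination, the latter with the probability that the relevant adjacent edge fires. Because each relevant edge fires with probability $\frac{1}{\binom{n}{2}} = \frac{2}{n(n-1)}$ rather than $\frac{1}{k-1}$, each ``comparison step'' of the bubble-sort analysis now corresponds to $\Theta(n^2/k)$ parallel interactions in expectation; equivalently, the multiplicative contraction factor per unit of parallel time is $1 - \Theta\!\left(\frac{k}{n^2}\cdot\frac{1}{k}\right) = 1 - \Theta(1/n^2)$ per interaction. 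Running the same geometric-decay computation as in Theorem~\ref{t:ones} — initial potential at most $2^{O(k)}$, contraction per interaction $1-\Theta(1/n^2)$ — shows the expected potential drops below $n^{-\eta}$ after $O(n^2(k+\log n))$ interactions, i.e.\ parallel time $O(n(k+\log n))$, and then the usual Markov-inequality step (the potential is a non-negative integer, so $P>0$ has probability at most $EP$) converts this to a whp statement that every bit has reached its destination.

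Finally I would combine the phases: once every transiting bit has settled and the tail bit in particular has been delivered, the bridging edge between the two heads is removed by the designated rule, which again takes expected parallel time $O(n)$ and $O(n\log n)$ whp. Summing the three whp bounds — construction $O(n\log n)$, pipelining $O(n(k+\log n))$, teardown $O(n\log n)$ — and taking a union bound over the $O(k)$ individual bit-transfer events and the construction/teardown events gives total parallel time $O(n(k+\log n))$ whp, after adjusting $\eta$ by a constant. The main obstacle I expect is the careful bookkeeping in the pipelining phase: one must verify that the newly appended agents of the new strand do not interfere with bits already in transit (the protocol is designed so a bit halts at the first unused agent, so the ``destination'' of a bit is well-defined but moves as the strand grows), and that the potential function — which in the bubble-sort proof was defined relative to a fixed array length — can be re-indexed consistently as the strand length increases, so that the per-step contraction estimate of Lemma~\ref{bubble} still applies block by block. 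Once that coupling between the growing-strand dynamics and the fixed-length bubble-sort dynamics is made precise, the rest is the arithmetic carried out above.
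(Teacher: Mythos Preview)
Your core engine --- the potential-function argument with contraction factor $1-\Theta(1/n^2)$ per interaction and initial potential $2^{O(k)}$, followed by the Markov step --- is exactly what the paper does, and your arithmetic lands on the same bound. The difference is in the decomposition.

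You split the process into three sequential phases (construction, pipelining, teardown) and bound each separately. But in the actual protocol these are not sequential: a new agent is appended to the new strand \emph{only on demand}, when a transiting bit arrives at the current end and finds no room (your own description of rules {\bf R9}--{\bf R10} type behaviour). So there is no ``construction phase'' that one can time independently of the pipelining; the construction is driven by the arrival of bits, which is governed by the pipelining. Your Phase~1 analysis (``probability of appending agent $i+1$ once $i$ are present is $\Theta((n-i)/n^2)$'') therefore does not stand on its own --- the relevant event is not just a free agent meeting the strand end, but a free agent meeting the end \emph{after} the next bit has percolated there. You correctly flag this coupling as ``the main obstacle'' at the end, but your earlier phase bounds already presuppose it has been dealt with.

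The paper sidesteps this entirely by folding construction into the potential argument rather than separating it out. It observes that an extension request succeeds with probability at most $1/\binom{n}{2}$, i.e.\ the same order as an edge-firing, so an extension can be treated as just another ``move'' in the bubble-sort analogy. With $k$ extensions and each bit travelling distance at most $2k$, the whole process is dominated by a single bubble-sort instance of length $3k$, giving initial potential at most $2^{3k}$ and per-interaction contraction $1-\frac{1}{2n(n-1)}$. One inequality then yields $t\ge 2n(n-1)(3k\ln 2+\eta\ln n)$ interactions whp, with no separate construction or teardown accounting and no need to re-index the potential as the strand grows. Your approach would eventually get there too, but only after making precise the coupling you identify --- and once you do that, you have essentially reconstructed the paper's unified argument.
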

\begin{proof}
We start with presenting further detail of the self-replication protocol.
As in all other protocols studied in this paper, the agents utilise a constant 
number of states, this time organised into the following triplets
$$<Role, B_i, Buffer>,$$
where the three attributes are:
\begin{description}
\item[Role] 
refers to the strand's 
{\em head} agent $H$, 
the {\em tail} agent $T$, or to {\em a regular} contributor $R$.

\item[B$_i$] refers to 
the bit of information combined with its position $i$ on the line. Note that each position $i$ is computed (and stored) modulo $3$ counting from the head's position $0.$ 
This allows agents to distinguish between the two directions: towards the head and towards the tail on the strand.
We use $B_T$ to denote the bit located in the tail agent of the strand.
Finally, by $|B_i|$ we denote the sole value of the information bit without its location. I.e., $B_i=<|B_i|,i>.$
Thus the binary representation of the information stored in the replicated strand corresponds to the sequence 
$|B_0|,|B_1|,\dots,|B_T|$.

\item[Buffer] is a part of agents' memory handling single information $|B_i|$ bits or control messages.
An agent is in the neutral state $\phi$ when its buffer is empty and no dedicated replication action (apart from waiting for further instructions) is needed from this agent. 
In the {\em self-replicated strand}
state $\phi^H$ denotes the empty buffer of an agent supporting bit transfer towards $H.$
Similarly, when the buffer is occupied by a bit $|B_x|$ moving towards $H$ the relevant state is $|B_x|^H$.

In the {\em newly formed strand}
state $\psi^T$ denotes the empty buffer of an agent supporting bit transfer towards the tail end.
And similarly, when the buffer is occupied by a bit $|B_x|$ moving towards the tail end the relevant state is $|B_x|^T$.
Here the control message $\psi^N$ indicates that further extension is expected at the current end of the new strand.
In this strand we distinguish also state $\psi$ (await further information) in agents just added at the tail end.
\end{description}

\noindent
Below we explain how the information (the sequence of bits) is transferred from the old to the new strand.
%
The full list of strand self-replication protocol rules follows.
Please note that this set of rules is designed for strands containing 
at least three agents, i.e., when all type of agents $H,T$ and $R$ are used.
The relevant protocols for shorter strands are trivial as they carry 
a constant number of bits. 

\vspace*{0.1cm}\noindent{\bf }
{\bf (R1) Start of the strand self-replication}
The replication process begins when the head $H$ in the neutral state $\phi$ interacts with a free agent in state $F.$ 

$$<H,B_0,\phi> +\ F\ +\ 0
\ \longrightarrow$$
$$<H,B_0,\phi^H>+<H,B_0,\psi>+\ 1$$

When this rule is applied, in the old strand signal $\phi^H$ (move all bits towards the head) is created, and in the new line signal $\psi$ means await further instructions (either to add a new agent or to conclude the replication process).

\vspace*{0.1cm}\noindent{\bf }

\vspace*{0.1cm}\noindent{\bf }
{\bf (R2) Create $|B_i|^H$ or $|B_T|^H$ bit message}
When signal $\phi^H$ arrives at the ${(i-1)}^{th}$ agent and the $i^{th}$ agent is neutral, message $|B_i|^H$ is placed in the buffer of the latter.

$$<R,B_i,\phi>+<R|H,B_{i-1},\phi^H>+\ 1
\ \longrightarrow$$
$$<R,B_i,|B_i|^H>+<R|H,B_{i-1},\phi^H>+\ 1$$

A similar action is taken at the tail agent in neutral state $<T,B_T,\phi>$

$$<T,B_T,\phi>+<R|H,B_{i-1},\phi^H>+\ 1
\ \longrightarrow$$
$$<T,B_T,|B_T|^H>+<R|H,B_{i-1},\phi^H>+\ 1$$

\vspace*{0.1cm}\noindent
The rules in {\bf R2} enable propagation of the request to pipeline all information bits towards the head $H.$ 
The rules {\bf R3} and {\bf R4} govern the relevant bit movement.

\vspace*{0.1cm}\noindent{\bf }
{\bf (R3) Move a non-tail bit message $|B_x|^H$ towards $H$}

$$<R,B_i,|B_x|^H>+<R|H,B_{i-1},\phi^H>+\ 1
\ \longrightarrow$$
$$<R,B_i,\phi^H>+<R|H,B_{i-1},|B_x|^H>\ +1$$

Note that when the bit message $|B_x|$ is moved state $\phi^H$ requesting further bit messages remains in the $i^{th}$ agent.

\vspace*{0.1cm}\noindent{\bf }
{\bf (R4) Move the tail bit message $|B_T|^H$  towards $H$}

$$<T|R,B_i,|B_T|^H>+<R|H,B_{i-1},\phi^H>+\ 1
\ \longrightarrow$$
$$<T|R,B_i,\phi>+<R|H,B_{i-1},|B_T|^H>+\ 1$$

Note that when the tail message $|B_T|^H$ is moved the neutrality of the tail agent is restored.
Eventually, thanks to the final transfer of the tail message (to the new strand) states of all buffers in the old strand are reset to $\phi$. 

\vspace*{0.1cm}\noindent
The following two rules govern transfer of bit messages between the old and the new strand.

\vspace*{0.1cm}\noindent{\bf }
{\bf (R5) Transfer a non-tail bit message $|B_x|^H$ to the head of the new strand}

$$<H,B_0,|B_x|^H>+<H,B_{0},\psi^T>+\ 1
\ \longrightarrow$$
$$<H,B_0,\phi^H>+<H,B_{0},|B_x|^T>+\ 1$$

After the transfer across the two strands the bit message is now targeting the tail end.

\vspace*{0.1cm}\noindent{\bf }
{\bf (R6) Transfer the tail bit message $|B_T|^H$ to the head of the new strand}

$$<H,B_0,|B_T|^H>+<H,B_{0},\psi^T>+\ 1
\ \longrightarrow$$
$$<H,B_0,\phi>+<H,B_{0},|B_T|^T>+\ 0$$

As indicated earlier, transfer of the tail message to the new strand and removal of the bridging edge restore the neutrality of the old strand which is now ready to reproduce again.

\vspace*{0.1cm}\noindent
Finally, we discuss the remaining rules governing the new strand creation.
Recall that the control message represented by state $\psi$ at the current end of the new strand indicates that this strand can be still extended. 

\vspace*{0.1cm}\noindent{\bf }
{\bf (R7) Move a non-tail message $|B_x|^T$ towards the tail end}

$$<H|R,B_i,|B_x|^T>+<R,B_{i+1},\psi^T>+\ 1
\ \longrightarrow$$
$$<H|R,B_i,\psi^T>+<R,B_{i+1},|B_x|^T>+\ 1$$

After this move the $i^{th}$ agent in the new strand awaits further bit messages.

\vspace*{0.1cm}\noindent{\bf }
{\bf (R8) Move the tail message $|B_T|^T$ towards the tail end}

$$<H|R,B_i,|B_T|^T>+<R,B_{i+1},\psi^T>+\ 1
\ \longrightarrow$$
$$<H|R,B_i,\phi>+<R,B_{i+1},|B_T|^T>+\ 1>$$

After this move the neutrality of the $i^{th}$ agent in the new strand is restored, i.e., no further bit messages from the head end are expected. 

\vspace*{0.1cm}\noindent
When there is no room for the bit message coming from the head end another agent has to be added to the tail end of the new strand. This is done in two steps. In the first step the current tail end requests addition of a new agent with control message $\psi^N.$ 

\vspace*{0.1cm}\noindent{\bf }
{\bf (R9) Request strand extension with $\psi^N$ on non-tail bit message $|B_x|^T$ arrival}

$$<R,B_i,|B_x|^T>+<R,B_{i+1},\psi>+\ 1
\ \longrightarrow$$
$$<R,B_i,|B_x|^T>+<R,B_{i+1},\psi^N>+\ 1.$$

The analogous rule requesting extension beyond the head of the new strand is

$$<H,B_0,|B_1|^H>+<H,B_{0},\psi>+\ 1
\ \longrightarrow$$
$$<H,B_0,|B_1|^H>+<H,B_{0},\psi^N>+\ 1.$$

\noindent
When ready (signal $\psi^N$ is present) the new agent is added from the pool of free agents.

\vspace*{0.1cm}\noindent{\bf }
{\bf (R10) Extend the new strand}

$$<H|R,B_{i},\psi^N>+\ F\ +\ 0
\ \longrightarrow$$
$$<H|R,B_{i},\psi^T>+<R,*,\psi>+\ 1$$

Note that after this rule is applied the newly added agent still awaits its bit message which is denoted by $*$.
This new bit message arrives with the help of the following two rules.

\vspace*{0.1cm}\noindent{\bf }
{\bf (R11) Arrival of a non-tail bit message}

$$<H|R,B_{i},|B_x|^T>+<R,*,\psi>+\ 1
\ \longrightarrow$$
$$<H|R,B_{i},\psi^T>+<R,B_x,\psi>+\ 1$$

As a non-tail bit arrived the new strand will be still extended which is denoted by messages $\psi^T$ (expect more bit messages from the head end)
in the $i^{th}$ agent and 
$\psi$ (further extension still possible).
The situation is different when the tail bit message arrives.

\vspace*{0.1cm}\noindent{\bf }
{\bf (R12) Arrival of the tail bit message}

$$<R,B_{i},|B_T|^T>+<R,*,\psi>+\ 1
\ \longrightarrow$$
$$<R,B_{i},\phi>+<T,B_T,\phi>+\ 1$$

After this rule is applied the neutrality at the tail end of the new strand is restored. 

Note, however, that since the neutrality of the agents closer to the head of this line was restored earlier the front of the new line can be already involved in the next line replication process. But since we use different messages for the transfers in the old and the new lines, the two simultaneously run processes will not interrupt one another.

We conclude the proof of Theorem~\ref{replicate} with Lemma~\ref{correct} stating the correctness of the proposed self-replication protocol, and Lemma~\ref{comp} addressing the parallel time complexity.

\begin{lemma}\label{correct}
The strand self-replication protocol based on rules {\bf R1-R12} is correct.
\end{lemma}
\begin{proof}
We argue first about correctness of the proposed protocol in the replicated (old) strand.
One can observe that the bit messages stored in the agents of the strand move along consecutive edges towards the head $H.$ They do not change their order as they only move when the preceding bit message vacates the relevant buffer.  
Finally, to conclude the replication process neutrality of each agent need to be restored, and this is done by the eventual transfer of the tail message $|B_T|^H.$
In what follows we discuss the actions in all three types of agents in the strand.

\begin{itemize}
    \item The actions of the tail node are governed by rules {\bf R2} and {\bf R4}. The first rule creates bit message $|B_T|^H$ and the second moves this message towards the head of the line, restoring the neutrality of the tail agent.
    \item The actions of a regular node require also rule {\bf R3} which supports movement of multiple non-tail bit messages towards $H.$
    And when the tail bit message arrives the neutrality of this regular agent is restored by rule {\bf R4} applied to this agent twice, first on the right then on the left side of this rule.

    \item The actions of head $H$ are more complex. The self-replication begins with application of rule {\bf R1} which comprises three different actions: forming a bridging edge, adding the head of the new line, and replication of its bit message in the newly formed head.
    This is followed by
    transfer of non-tail bit messages to the new strand by alternating use of rules {\bf R3} and {\bf R5}.
    When eventually the tail bit message arrives during application of rule {\bf R4}, the neutrality of the head is restored by rule {\bf R6}.
    This concludes the replication process.
\end{itemize}

For the full cycles of rules utilised in the replicated strand see Figure~\ref{f:old}.

\noindent
The new line formation requires different organisation of states and transitions. Note that all agents added to the line must originate in state $F,$ see Figure~\ref{f:new}.
Also in this case we argue that the bit messages arrive in the unchanged order and eventually the neutrality of all agents is reached (starting from the head and finishing with the tail agent) with the help of the tail bit message $|B_T|^T$.

\begin{itemize}
    \item Formation of the tail agent requires application of only two rules: {\bf R10} to add a new agent and {\bf R12} to equip this agents with the tail message $|B_T|,$
    when neutrality of this agent is reached.
    \item The situation with the regular nodes is more complex as they have to accept and store their own bit message $|B_i|$ (done by rule {\bf R11}), add additional agent (via alternating application of rules {\bf R9} and {\bf R10}) moving all non-tail bit messages following $|B_i|$ in the old strand (rule {\bf R7}) until the tail bit message arrives (rule {\bf R8}) and finally neutrality of the regular agents is reached via rule {\bf R8} or rule {\bf R12} if the agent precedes the tail agent.
    \item Rule {\bf R1} creates the head of the new line, rules {\bf R9} and {\bf R10} add a new agent, rules {\bf R5} and {\bf R7} move non-tail bit messages in the direction of the tail until the tail bit message arrives (rule {\bf R6}) when the neutrality of the head is reached (rule {\bf R8}). 
\end{itemize}

For the full cycle of rules used by agents in the replicated strand see Figure~\ref{f:new}.

\noindent
As discussed earlier in the new strand what matters is that neutrality is reached earlier by agents located closer to the head, as this strand is allowed to start self-replication while some bit messages (from the old strand) are still being moved towards the tail end (which may not be fully formed yet). However, it is enough to observe that these two replication processes are independent as they are based on movement of bit messages towards the opposite directions and in turn they share no rules.

\begin{figure}[ht]
    \centering
    \includegraphics[scale=0.35]{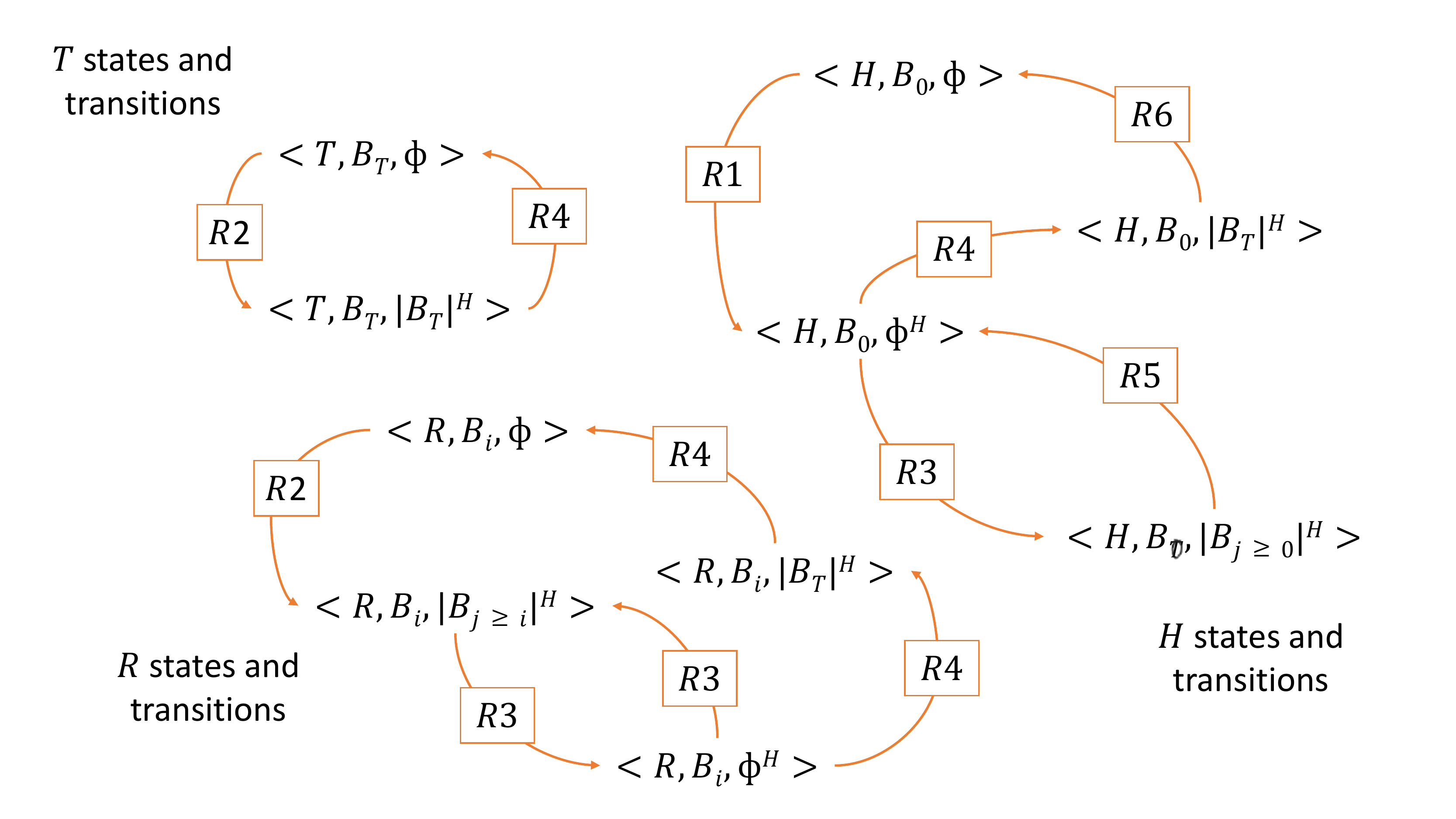}
    \caption{The old strand states and transitions}
    \label{f:old}
\end{figure}

\begin{figure}[ht]

    \centering
    \includegraphics[scale=0.35]{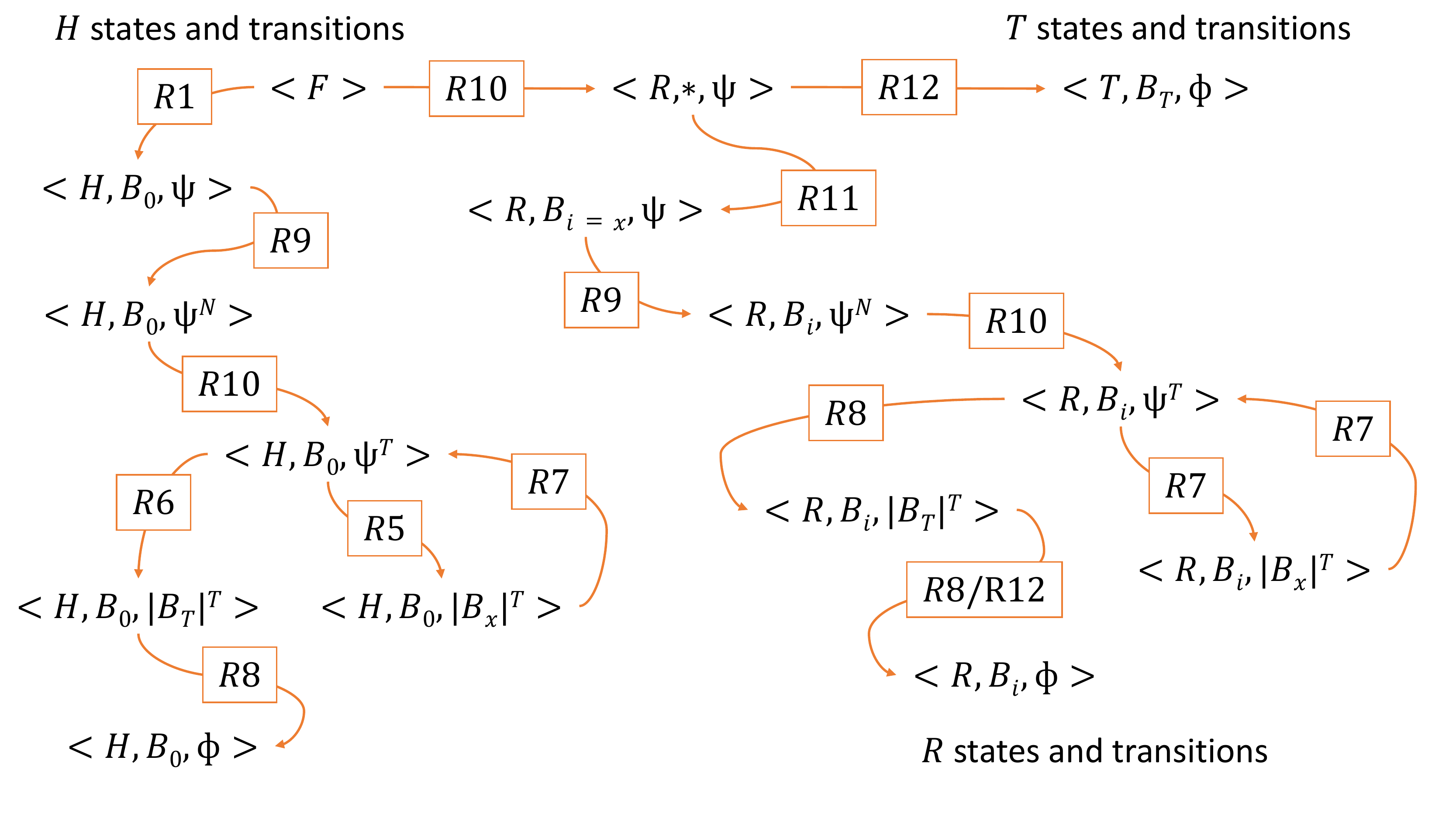}
    \caption{The new line states and transitions}
    \label{f:new}
\end{figure}

\end{proof}



\begin{lemma}\label{comp}
The strand self-replication protocol based on rules {\bf R1-R12} stabilises in parallel time $O(n(k+\log n))$ whp. 
\end{lemma}

\begin{proof}
The strand self-replication protocol mimics the pipelining mechanism 
utilised and analysed in the probabilistic bubble-sort procedure. 
The main differences is the fact that 
the bits of information are moved along the original and the new strand
at the same time as the new strand is being constructed.
In particular, when the leading bit reaches the current end of the new strand, 
the extension request (for a new agent and edge connection) is successful with probability $\le 1/\binom{n}{2}.$ 
Also move of any bit along an existing edge is successful with probability $1/\binom{n}{2}.$
Thus the expected potential change associated with one interaction 
(the counterpart of the inequality from Lemma~\ref{bubble}) is
\[EP(C')\le \left(1-\frac{1}{2n(n-1)}\right) P(C). 
\]
As the total number of extension requests is $k$ and 
the longest distance any bit has to move is $2k,$
%
%
we get the initial potential $P(C_0)\le 2^{3k}$ in this case.
In order to estimate the number of interactions $t$, after which
$EP(C_t)\le n^{-\eta}$, we get inequality
\[EP(C_t)\le \left(1-\frac{1}{2n(n-1)}\right)^t P(C_0)\le \exp\left(-\frac{t}{2n(n-1)}\right)2^{3k}\le n^{-\eta}, 
\]
which holds for $3k\ln 2+\eta\ln n\le \frac{t}{2n(n-1)}$
and in turn for
\[t\ge 2n(n-1)(3k\ln 2 +\eta\ln n).
\]
\end{proof}

\end{proof}

\begin{corollary}
The strand self-replication protocol generates $l$ copies of a $k$-bit strand in parallel time $O(n(k+\log n)\log l)$ whp. 
\end{corollary}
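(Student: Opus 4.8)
The plan is to bootstrap the single-copy result of Theorem~\ref{replicate} by a doubling (binary-tree) argument. The key observation is that once a strand has been self-replicated, we have two identical $k$-bit strands, each of which can independently serve as a source for the next round of replication. Therefore, after one round we have $2$ copies, after two rounds $4$ copies, and after $\lceil\log_2 l\rceil$ rounds at least $l$ copies. The total parallel time is the sum over the $\lceil\log_2 l\rceil$ rounds, and Theorem~\ref{replicate} gives that each round costs $O(n(k+\log n))$ parallel time whp, so naively the bound $O(n(k+\log n)\log l)$ follows.

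There are two points that need care. First, the rounds must be \emph{synchronised}: the $i$-th round of replication should not begin at any strand until all strands produced in round $i-1$ have completed their replication and restored neutrality. This is exactly what the periodic leader based clock of Section~\ref{plbc} provides --- each full turn of the clock runs in parallel time $\Theta(n\log n)$, which dominates the $O(n(k+\log n))$ guarantee only when $k=O(\log n)$, so in general we need the round length to be $\Theta(n(k+\log n))$; we set the clock's top value $max$ accordingly (using the relation between $max$, $d$ and the collection time established around Theorem~\ref{t:counting}, scaled up by the factor $k+\log n$). The head agent of each strand treats the clock's ``end of round'' epidemic as the trigger to apply rule~{\bf R1} again; since by Lemma~\ref{comp} each replication finishes within a round whp, and by Lemma~\ref{correct} it finishes with every agent neutral, all strands are ready simultaneously for the next round.

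Second, we must ensure the \emph{free-agent budget}: replicating $l$ copies consumes $lk$ agents (plus the original $k$), so we implicitly assume $lk = O(n)$, i.e.\ that the population is large enough to host all copies; this is consistent with the standing assumption that strand lengths are much smaller than $n$. Under this assumption, in round $i$ there are at most $2^{i-1}\le l$ strands replicating concurrently, and since each individual replication's success probabilities ($\le 1/\binom{n}{2}$ per interaction, as in Lemma~\ref{comp}) are unaffected by the presence of other strands --- the potential-function analysis of Lemma~\ref{bubble} is per-strand and the interactions touching a fixed strand's edges are a sub-process of the scheduler --- each of the $\le l$ concurrent replications completes within its round whp. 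The main obstacle is the union bound over all rounds and all strands: there are $\sum_{i=1}^{\lceil\log l\rceil} 2^{i-1} = O(l)$ individual replication events, each failing to finish on time with negligible probability $n^{-\eta'}$; choosing $\eta'=\eta+1+\log_n l$ (absorbed into the constant since $l\le n$) keeps the overall failure probability at $n^{-\eta}$. Summing the $\lceil\log_2 l\rceil$ round lengths of $\Theta(n(k+\log n))$ each yields the claimed $O(n(k+\log n)\log l)$ parallel time whp.
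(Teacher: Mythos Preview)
Your doubling/binary-tree argument is precisely the reasoning the paper intends; the corollary is stated there without proof and is meant to follow directly from Theorem~\ref{replicate} by the observation that after $\lceil\log_2 l\rceil$ generations of replication one has at least $l$ copies. Your second ``careful point'' (the free-agent budget $lk=O(n)$, and the fact that the per-strand potential argument of Lemma~\ref{comp} is unaffected by other strands replicating concurrently) is correct and is the only thing that actually needs to be said beyond the one-line doubling.

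Your first ``careful point'', however, introduces a gap rather than closing one. Synchronisation is \emph{not} required: the protocol is explicitly designed so that a strand begins a new replication the moment its head returns to state $\phi$, and the paper stresses that a newly formed strand may start replicating from its head while its tail is still being completed. The asynchronous version of your tree argument already works: along any root-to-leaf path of length $\lceil\log_2 l\rceil$, each edge is a single replication costing $O(n(k+\log n))$ whp by Theorem~\ref{replicate}, and your union bound over the $<l\le n$ replications finishes the proof. More seriously, the synchronisation mechanism you propose cannot be implemented in this model. You write that one should ``set the clock's top value $max$ accordingly \ldots\ scaled up by the factor $k+\log n$'', but $max$ is a constant hard-coded into the transition function, whereas $k$ is an arbitrary parameter unknown to the agents and independent of the state space (this independence is the entire point of strands). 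A periodic clock with round length $\Theta(n(k+\log n))$ therefore does not exist with $O(1)$ states. Simply delete the synchronisation paragraph and keep the rest; the proof is then correct and matches the paper.
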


\subsection{Pattern matching with strand self-replication\label{patmat}}
{\em Pattern matching} is a classical problem in Algorithms~\cite{Crochemore2016}.
In this problem there are two strings,
a shorter {\em pattern} $P$ of length $k$ and usually much longer {\em text} $T$ of length $m$.
The main task in pattern matching is to find all occurrences of $P$ in $T$.
We demonstrate how to utilise self-replication mechanism in pattern matching
in the network constructors model.

Assume we have two strands, one containing $P^R$ 
(reversed sequence of bits in $P$) and the other containing $T.$
One can find all occurrences of $P$ in $T$ by 
forming a single strand containing sequence $T\cdot P^R$, 
further injection to and pipelining across $T$ the consecutive bits of $P^R$
while adopting the pattern matching procedure from~\cite{CG94}.
Using this approach and Theorem~\ref{comp} one can prove that 
utilising a fixed number of states the parallel time of finding 
all occurrences of $P$ in $T$ is $O(n(m+k+\log n))$ whp.  

The pattern matching protocol can be improved by 
instructing the original strand and all replicas containing $P^R$ to alternate 
between insertion of its content at a random position in $T$ and self-replication.
Each insertion and self-replication takes time $O(n(k+\log n)).$
After $\frac{m}{k}$ pattern replications and insertions the distance between 
any two consecutive insertion points in strand $T$ is $O(k\log m)$ whp, 
for $m$ large enough.
Thus the parallel time of the improved protocol is independent from $m$
and is bounded by $O(n(k+\log n)\log n),$ where 
$O(n(k+\log n)\log{\frac{m}{k}})$ comes from all insertions and self-replications, and
$O(n(k\log m+\log n))$ refers to pattern matching on each segment of size $O(k\log m).$

\section{Conclusion and Open Problems}
Our new $O(n\log n)$ parallel time spanning line construction protocol is optimal whp.
Please note that the lower bound argument for line formation from Theorem \ref{lowerbound} does not depend on the size of the state space. 
This means that $o(n\log n)$ parallel time line construction whp is not possible even if the state space is arbitrarily large.
Note also, that using an analogous formal argument on can prove $\Omega(n)$ parallel time lower bound for line construction 
in expectation, when the high probability guaranties are not imposed.
In fact, under such relaxed probabilistic requirements and slightly increased to $O(\log n)$ state space one can construct a spanning line in parallel time $O(n\log\log n)$ in expectation.
Without going into great detail,
this is done by initial simultaneous construction of $\log n$ (shorter) independent lines $L_1,\dots,L_{\log n}$ spanning all agents, followed by establishing connections between the relevant endpoints of lines $L_i$ and $L_{i+1},$ for $i=1,\dots,\log(n)-1.$ 
This is a separation result distinguishing between different probabilistic qualitative expectations. It also suggests a limited trade-off between the state space and the expected time for line formation.
Finally note, that the line formation protocol described in this paper can be extended to spanning ring formation with the help of the leader based phase clock. Such ring formation protocol stabilises in $O(n\log n)$ parallel time, both in expectation and whp.

Going beyond the proposed strand self-replication protocol one could investigate whether other network structures can self-replicate and at what cost. 
Also further studies on utilisation of strands (as carriers of information) in more complex distributed processes is needed.
Finally, one should seek alternative models for population protocols to  resolve
the bottleneck of infrequent 
(with probability $\Theta(\frac{1}{n^2})$) 
visits to the existing edges, which would likely result in faster construction and replication protocols. 

%

\bibliography{references}

\end{document}